\title[Exploiting inter-agent coupling information for efficient rl of cooperative lqr]{Exploiting inter-agent coupling information for efficient reinforcement learning of cooperative LQR}
\author{%
 \Name{Shahbaz P Qadri Syed} \Email{shahbaz\_qadri.syed@okstate.edu}\\
 % \addr Address 1
 \Name{He Bai} \Email{he.bai@okstate.edu}\\
 \addr Mechanical and Aerospace Engineering, Oklahoma State University, USA.%
}
\newcommand{\I}{\mathcal{I}}
\newcommand{\R}{\mathbb{R}}
\newcommand{\bphi}{\mathbf{\Phi}}
\newcommand{\bpsi}{\mathbf{\Psi}}
\newcommand{\bxi}{\mathbf{\Xi}}
\newcommand{\T}{\intercal}
\newcommand{\mS}{\mathcal{S}}
\newcommand{\mZ}{\mathcal{Z}}
\newcommand{\mc}[1]{\mathcal{#1}}
\newcommand{\mb}[1]{\mathbb{#1}}
\newcommand{\mbf}[1]{\mathbf{#1}}
\newcommand{\norm}[1]{\left\lVert #1 \right\rVert}
\newcommand{\lrb}[1]{\left[#1\right]}
\newcommand{\lrp}[1]{\left(#1\right)}
\newcommand{\lrs}[1]{\{#1\}}
\newcommand{\trace}[1]{\text{Tr}\left(#1\right)}
\newtheorem{lm}{\textbf{Lemma}}[section]
\newtheorem{thm}{\textbf{Theorem}}[section]
\newtheorem{prop}{\textbf{Proposition}}[section]
\begin{document}

\maketitle

\begin{abstract}%
 Developing scalable and efficient reinforcement learning algorithms for cooperative multi-agent control has received significant attention over the past years. Existing literature has proposed \textit{inexact} decompositions of local Q-functions based on empirical information structures between the agents. In this paper, we exploit inter-agent coupling information and propose a systematic approach to \textit{exactly} decompose the local Q-function of each agent. We develop an approximate least square policy iteration algorithm based on the proposed decomposition and identify two architectures to learn the local Q-function for each agent. We establish that the worst-case sample complexity of the decomposition is equal to the centralized case and derive necessary and sufficient graphical conditions on the inter-agent couplings to achieve  better sample efficiency. We demonstrate the improved sample efficiency and computational efficiency on numerical examples.
\end{abstract}

\begin{keywords}%
Cooperative Linear quadratic regulator, Multi-agent least square policy iteration, Multi-agent learning for control.
\end{keywords}

\section{Introduction}

Owing to the recent advancements in multi-agent reinforcement learning (MARL), there is an increasing interest to investigate MARL as a solution concept for model-free optimal control of uncertain cooperative multi-agent systems (MAS). However, scaling from single agent RL to MARL poses two main challenges: first, a centralized approach for learning is limited by the curse of dimensionality because of the combinatorial joint state-control space of the agents. Several recent works have attempted to address the scalability issue by approximation or relaxed assumptions during learning (e.g.,~\cite{lowe2017, foerster2017,zhang2018fully,guestrin2001,sunehag2017,rashid2018,qu2020scalable,jing2024distributed}).

  Second, each agent typically interacts with only a subset of agents resulting in a partial information structure and \textit{non-stationarity} during learning. In addition, the problem of finding globally optimal controllers with information constraints is known to be NP-hard even in a model-based setting~(\cite{witsenhausen1968counterexample,blondel2000survey,papadimitriou1986intractable,ye2022sample}). 
  Due to these challenges, the effectiveness of MARL for optimal control of uncertain MAS is not yet fully explored even in the linear quadratic regulator (LQR) setting, where the agent dynamics are linear time-invariant and the cost is quadratic.

  In this paper,  we study a cooperative LQR problem and propose a systematic approach to decompose the Q-function of each agent and address the issues of information constraint and scalability. We identify \textit{exact} decompositions of the individual Q-function and the gradient of the global Q-function with respect to each agent's control parameters, given the knowledge of state-, cost-, and observation couplings between the agents. Leveraging inter-agent couplings to develop efficient RL algorithms for MAS is of great interest recently. For instance, \cite{qu2020scalable} develop a Q-function approximation strategy based on the spatially exponentially decaying (SED) property~(\cite{bamieh2002distributed,motee2008optimal,gamarnik2014correlation,qu2020scalable}) which exploits the structure of local interactions in state transitions assuming decentralized observations and cost. 
   \cite{alemzadeh2019distributed} consider a distributed Q-learning approach for dynamically decoupled agents with coupled costs. For an LQR problem,~\cite{alemzadeh2019distributed} show that the controller of each subsystem asymptotically converges to the optimal controller.  Other distributed RL methods~(e.g., \cite{kar2013QD,zhang2018fully,macua2018diff,zhang2020cooperative,zhang2021finite,li2023f2a2,qu2019value}) have been proposed to employ a consensus algorithm to estimate the global cost with local information from neighbors. The performance of such algorithms depends on convergence of the consensus and the estimation quality of the global cost. Another relevant class of methods are value function factorizatio based methods~(\cite{koller1999,guestrin2001,sunehag2017,rashid2018,son2019}) and coordination graph (CG) based methods~(\cite{guestrin2001, guestrin2002, kok2006}), which aim to \textit{approximately} factorize the global Q-function using the individual Q-functions of all the agents. 
   However, the existing methods typically consider one or two types of inter-agent couplings and perform approximate decompositions of the Q-function. In contrast, in this paper we consider that local interactions between the agents occur in the form of state, observation and cost couplings and present an approach to deduce an exact decomposition of the individual Q-function.

   The formulation presented in this paper is closely related to~\cite{jing2024distributed} that also incorporates the three types of inter-agent couplings and proposes a distributed RL framework based on zeroth order optimization. However, there are a few key differences. First,~\cite{jing2024distributed} consider a value function based approach that depends on the global state and action whereas we identify an exact decomposition of the Q-function which is more closely related to the actor-critic methods~(\cite{konda1999actor}) in RL. Second,~\cite{jing2024distributed} introduce a new graph based on the inter-agent couplings called \textit{learning graph} which represents the information flow during learning. This learning graph is equivalent to the gradient dependency graph introduced in this paper. However, we derive this graph independently based on the decomposition of the Q-function.
  
  LQR is a popular benchmark in RL to analyze the performance and limitations for various algorithms~(\cite{recht2019tour}). Although the model-free LQR setting (for both structured and unstructured) is well studied in single agent RL~(e.g.,~\cite{bradtke1996linear,lagoudakis2003least,krauth2019finite,park2020structured}), it is less explored in the multi-agent setting.  For some recent results in the networked LQR setting, see~\cite{li2021distributed, jing2021learning,jing2021model,alemzadeh2019distributed,zhang2023optimal,olsson2024scalable}. Another contribution of this paper is that we analyze the sample complexity and the estimation error in Q-function parameters for the cooperative LQR problem using the least square policy iteration (LSPI) framework proposed in \cite{lagoudakis2003least}. In particular, we propose a \textit{multi-agent LSPI} (MALSPI) algorithm based on the proposed Q-function decomposition and identify two architectures (\textit{direct} and \textit{indirect}) to learn the local Q-function for each agent. We establish that the worst case sample complexity of the direct case is equal to the centralized case, and the worst case sample complexity of the indirect case is equal to the direct case. We also derive the necessary and sufficient graphical conditions on the inter-agent couplings for improved sample complexity of the decomposition. Finally, we validate the sample complexity results using a numerical example.

The rest of the paper is outlined as follows. Section~\ref{sec:probform} presents the formulation of the cooperative LQR problem. Section~\ref{sec:decompQ} introduces the decomposition of the local Q-function and the decomposition of the gradient of the global Q-function. Section~\ref{sec:alg} describes the MALSPI algorithm, and Section~\ref{sec:theory} states the main sample complexity results of the algorithm. Simulation results are presented in Section~\ref{sec:sim}. Section~\ref{sec:conc} concludes the paper.
  
\textit{Notation:}  We denote a Gaussian distribution with mean $\mu$ and covariance $\Sigma$ as $\mc{N}(\mu,\Sigma).$ Let $\mb{I}_n$ and $\mbf{0}_n$ denote an identity matrix and a zero vector of size $n$ respectively. For an $n$-dimensional symmetric matrix $M$, let $\text{svec}(M)\in \R^{\frac{n(n+1)}{2}}$ be the vector of the upper triangular entries of $M$ such that $||M||^2_F = \langle \text{svec}(M),\text{svec}(M)\rangle$ and $\text{smat}(\cdot)$ represent the inverse operation of $\text{svec}(\cdot)$. Let $\mc{L}(X,Y)$ be the analytical solution of the discrete-time Lyapunov equation $\mc{P} = X \mc{P} X^\T + Y.$ 
\section{Formulation of the cooperative LQR problem}\label{sec:probform}
Consider a multi-agent system comprising $N$ agents with linear time-invariant dynamics and quadratic costs. Let $\mc{V}=\{1,\cdots,N\}$. We assume a generic setting, where each agent's dynamics and cost may depend on other agents' states and controls and its observations are the states of a subset of all the agents. Specifically,  the dynamics of agent $i \in \mc{V}$ is given by
\begin{align}
    x_i({t+1})&= \sum_{j\in \I^i_S} A_{ij} x_j(t) + \sum_{j\in \I^i_S} B_{ij} u_j(t) + w_i(t),
    \label{eq:inddyn}
\end{align}
where $x_i(t)\in\mathbb{R}^{n_x}$, $u_i(t)\in\mathbb{R}^{n_u}$, $w_i(t) \sim \mathcal{N}(0,\sigma^2_w~\mb{I}_{n_x})$ is the process noise, and the set $\mathcal{I}^i_S$ contains the indices of the agents impacting the dynamics of agent $i$. For the ease of exposition, we have assumed that the dimensions of $x_i(t)$ and $u_i(t)$ are homogeneous across all the agents. However, the results in this paper can be easily extended to heterogeneous state and control dimensions. Let $x_{\I^i_S}(t) = [\cdots~x^\T_j(t)~\cdots]^\T$ and $u_{\I^i_S}(t)=[\cdots~u^\T_j(t)~\cdots]^\T$, $\forall~j\in\I^i_S$. 

The quadratic cost incurred by agent $i \in \mc{V}$ at time $t$ is given by
\begin{align}
    c_i(x_{\I^i_C}(t),u_{\I^i_C}(t)) &= (x_{\I^i_C}(t))^\T S_i (x_{\I^i_C}(t)) +(u_{\I^i_C}(t))^\T R_i (u_{\I^i_C}(t)), \label{eq:indcost}
\end{align}
where the set $\mathcal{I}^i_C$ contains the indices of the agents impacting the cost of agent $i$, $x_{\I^i_C}(t) = [\cdots~x^\T_j(t)~\cdots]^\intercal$, and $u_{\I^i_C}(t)=[\cdots~u^\T_j(t)~\cdots]^\intercal,$ $\forall~j\in\I^i_C.$ We assume that $S_i$ is a positive semi-definite matrix and $R_i$ is a positive {definite} matrix. 

Agent $i$ observes the states of the agents in a set $\I^i_O$ and prescribes a structured linear control \begin{align}
    \pi_{i}(x_{\I^i_O}(t)) := u_i(t) &= K_i x_{\I^i_O}(t), 
    \label{eq:indcontrol}
\end{align}
where $K_i = [\cdots~K_{ij}~\cdots],~x_{\I^i_O}(t) = [\cdots~x^\T_j(t)~\cdots]^\T,$ $\forall~j\in \I^i_O.$

The agent couplings in state dynamics, costs, and observations are captured by the index sets $\mathcal{I}^i_S$, $\mathcal{I}^i_C$, and $\mathcal{I}^i_O$, $\forall i$. We assume that these sets are time-invariant and further define a \textit{state graph} $\mc{G}_S = \{\mc{V}, \mc{E}_S\},$  an \textit{observation graph} $\mc{G}_O = \{\mc{V}, \mc{E}_O\},$  and  a \textit{cost graph} $\mc{G}_C = \{\mc{V}, \mc{E}_C\}$. An edge $(i,j) \in \mc{E}_S$, if $i\in \mathcal{I}^j_S$. Similarly, an edge $(i,j) \in \mc{E}_O$ if $i\in\mathcal{I}^j_O$ and an edge $(i,j) \in \mc{E}_C$ if $i\in\mathcal{I}_C^j$. We now present the combined state dynamics, cost, and controller. Let $x(t) = [x^\T_1(t)~~\cdots~~x^\T_N(t)]^\T \in \R^{N n_x},~u(t) = [u^\T_1(t)~~\cdots~~u^\T_N(t)]^\T \in \R^{N n_u}$, and $w(t) = [w^\T_1(t)~~\cdots~~w^\T_N(t)]^\T \in \R^{N n_x}$. The global state of the multi-agent system, $x(t)$, evolves according to
\begin{align}
  x(t+1) =  A x(t) + B u(t) + w(t), 
  \label{eq:dynamics}
\end{align}
where $A\in \R^{Nn_x \times N n_x}$ and $B\in \R^{N n_x \times N n_u}$ are \textit{unknown} system matrices depending on $A_{ij}$ and $B_{ij}$ in~\eqref{eq:inddyn}, respectively. Observe that $~\forall~i\in \mc{V},~\forall~j \in \mc{V}\setminus \I^i_S$, $A_{ij} = \mbf{0}_{n_x\times n_x}$, and $B_{ij} = \mbf{0}_{n_x\times n_u}$.  

In our cooperative LQR problem, the global cost $C(x(t), u(t))$ is given by the average of the individual cost functions, i.e.,
$C(x(t),u(t)) = \frac{1}{N}\sum_{i = 1}^N c_i(x_{\I^i_C}(t),u_{\I^i_C}(t))$,  which is rewritten as 
\begin{align}
    C(x(t),u(t)) = x^\T(t) S x(t) + u^\T(t) R u(t),\quad \forall~t,\label{eq:cost}
\end{align}
where $S$ and $R$ are  cost matrices depending on $S_i$ and $R_i$ in~\eqref{eq:indcost}. Note that $\forall~i\in \mathcal{V},~\forall~j\in \mathcal{V}\setminus \I^i_C,$ $S_{ij} = \mbf{0}_{n_x\times n_x},$ and $R_{ij} = \mbf{0}_{n_u\times n_u}$. The prescribed structured static linear controllers~\eqref{eq:indcontrol} represented in a compact form is given by $\pi(x(t)):= u(t) = K x(t),$ 
where the control gain $K$ depends on $\mc{G}_O$ such that $\forall~i\in \mc{V},~\forall~j\in \mc{V}\setminus \I^i_O$, $K_{ij} = \mbf{0}_{n_u \times n_x}.$

We seek to optimize the control $u(t)$ to minimize the expected long term cost. 
The \textit{global} Q-function for the cooperative LQR problem is defined as $Q(x,u) = \mathbb{E}[\sum_{t=0}^\infty C(x(t),u(t))| x(0)\! =\! x,$ $ u(0)\! =\! u]$, where the expectation is taken over the state and control distributions, respectively. For agent $i\!\in\!\mc{V}$, define a \textit{local} $Q$ function as  $Q_i(x,u) = \mathbb{E}\left[\sum_{t=0}^\infty c_i(x_{\mathcal{I}^i_C}(t), u_{\mathcal{I}^i_C}(t)) | x(0) \!=\! x, u(0)\! = \!u\right]$, which satisfies $Q(x,u) = \sum_{i=1}^N Q_i(x,u)$. 

The objective of an infinite horizon average cost cooperative LQR problem is to compute an optimal controller 
$u^*(t)=K_* x(t),$
that minimizes $$J(x(0),u(0)) = \underset{{K}}{\text{min }} \underset{T\rightarrow \infty}{\text{ lim}} \mb{E}_{x(0), u(0)} \left[Q(x(0),u(0))\right] = \underset{{K}}{\text{min }} \underset{T\rightarrow \infty}{\text{ lim}} \mb{E}\left[ \sum_{t=0}^T x^\T(t) S x(t) + u^\T(t) R u(t)\right],$$
when the system starts from a global state $x(0)$, executes a global control $u(0)$ and follows a policy {$\pi(\cdot)$} thereafter. { Given a global state $x(t)$, the global controller 
$u(t)$ and the collection of local controllers 
$\lrs{u_i(t)}_{i=1}^N$can be transformed into each other.} Therefore, our cooperative LQR problem can be concisely expressed as
\begin{eqnarray}
    \underset{K}{\text{min }}& \mathbb{E}_{x(0), u(0)} ~\lrb{Q(x(0), u(0))|\mc{G}_S,\mc{G}_O,\mc{G}_C}\nonumber\\
    \text{subject to }& x_i(t+1) \sim \mc{N} \lrp{ \sum_{j\in \I^i_S} \lrb{A_{ij} x_j(t) + B_{ij} u_j(t)},  \sigma^2_w~\mb{I}_{n_x}},\nonumber\\
    & u_i(t) = K_i x_{\I^i_O}(t),~\forall~i\in \mc{V}.
    \label{eq:nmasopt}
\end{eqnarray}

For general linear time-invariant systems, there is no known tractable algorithm for computing optimal $K_i$~(\cite{rotkowitz2005characterization}). Moreover,~\cite{blondel2000survey} showed that the problem of finding stabilizing decentralized output feedback is NP-complete. Tractable algorithms  that guarantee a global optimal controller have been developed for specific information structures, such as partially nested information structures~(\cite{ho1972team}), quadratic invariance~(\cite{rotkowitz2005characterization}), partially ordered sets~(\cite{shahposets2013}), and decoupled control cost~(\cite{kashyap2023guaranteed}). In this work, we propose an algorithm that aims at solving the general non-convex optimization problem~\eqref{eq:nmasopt} (with a static output feedback) using a policy iteration algorithm. Our main goal is to demonstrate improved sample complexity in estimating $Q$ functions by recognizing structures in agent couplings. Convergence properties of the proposed algorithm will be investigated in the future work. 

The system matrices $A$, $B$, and the cost matrices $S$, $R$ are unknown in the model-free setting. However, their sparsity patterns may be known based on the inter-agent couplings (e.g., through physical and network couplings). Thus, we assume that the agent coupling graphs $\mc{G}_S$, $\mc{G}_O$, and $\mc{G}_C$ are available and develop a systematic procedure to exploit the interplay between the system sparsity and the coupling graphs to improve the sample complexity of the cooperative LQR problem. Specifically, we decompose the $Q_i$ function given $\mc{G}_S$, $\mc{G}_O$, and $\mc{G}_C$ (Section~\ref{sec:decompQ}), from which a Q-learning algorithm is proposed and analyzed (Section~\ref{sec:alg} and~\ref{sec:theory}). 
 
\section{Decomposition of the Q-function}\label{sec:decompQ}
Due to the cooperative average quadratic cost, solving~\eqref{eq:nmasopt} is equivalent to minimizing the individual expected $Q_i(\cdot)$, $\forall~i\in\mc{V}$ which is typically assumed to be dependent on the global state and global control in the literature (see e.g.,~\cite{jing2024distributed},~\cite{lowe2017},~\cite{zhang2018fully}). Such a dependency incurs a combinatorial state-control dimension that grows exponentially with the number of agents leading to the curse of dimensionality.  
Since $Q_i(x(t),u(t)) = c_i(x_{\mathcal{I}^i_C}(t), u_{\mathcal{I}^i_C}(t)) + \mb{E}\lrb{Q_i(x(t+1),u(t+1))}$, $Q_i(\cdot)$ depends only on $\I^i_C$ and the subset of agents required to propagate the states and controls of the agents in $\I^i_C$ through time. That is, $Q_i(\cdot)$ depends on a closed subset (under state transition and control) of agents required to compute $\I^i_C$ at each time instant. We formalize this notion in Lemma~\ref{lm:rtc}. Let $\mc{G}_{SO} = \mc{G}_S\cup \mc{G}_O$ and  define $\mc{R}^i_{SO} = \lrs{j\in \mc{V}|j \xrightarrow{\mc{E}_{SO}}i}\cup \lrs{i}$ as the reachability set of $i$ in the $\mc{G}_{SO}.$ 
Lemma~\ref{lm:rtc}\footnote{The proofs of Lemma~\ref{lm:Qset} and  the other theoretical results can be found in the appendix.} shows that $\forall i\in \mc{V}$, the set $\I^i_Q \triangleq \lrs{j\in \mc{R}^k_{SO}| \forall~k\in \I^i_C},$ is closed under state transition and control generation.
\begin{lm}\label{lm:Qset}
    For any $i,j,k \in \mathcal{V}$, if $j\in \I^i_Q$, then for any $k\in \mathcal{R}^j_{SO}$, $k\in \I^i_Q$.
    \label{lm:rtc}
    \end{lm}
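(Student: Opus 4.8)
The plan is to reduce the claim to transitivity of reachability in the directed graph $\mc{G}_{SO} = \mc{G}_S\cup\mc{G}_O$. The key observation, which I would isolate first, is the following elementary property of the predecessor/reachability sets: for any $a,b,c\in\mc{V}$, if $b\in\mc{R}^c_{SO}$ and $a\in\mc{R}^b_{SO}$, then $a\in\mc{R}^c_{SO}$. To see this, unfold $\mc{R}^v_{SO}=\lrs{u\in\mc{V}\mid u\xrightarrow{\mc{E}_{SO}}v}\cup\lrs{v}$ and argue by cases: if $a=b$ or $b=c$ the conclusion is immediate, and otherwise there is a directed walk from $a$ to $b$ and a directed walk from $b$ to $c$ in $\mc{G}_{SO}$; concatenating them yields a directed walk from $a$ to $c$, so $a\xrightarrow{\mc{E}_{SO}}c$ and hence $a\in\mc{R}^c_{SO}$.

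Next I would unfold the definition of $\I^i_Q$, reading $\I^i_Q\triangleq\lrs{j\in\mc{R}^k_{SO}\mid\forall\,k\in\I^i_C}$ as the union $\bigcup_{m\in\I^i_C}\mc{R}^m_{SO}$. Thus $j\in\I^i_Q$ means there exists $m\in\I^i_C$ with $j\in\mc{R}^m_{SO}$. Given any $k\in\mc{R}^j_{SO}$, apply the transitivity property above with the triple $(a,b,c)=(k,j,m)$ to obtain $k\in\mc{R}^m_{SO}$. Since $m\in\I^i_C$, this yields $k\in\bigcup_{m'\in\I^i_C}\mc{R}^{m'}_{SO}=\I^i_Q$, which is exactly the claim.

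I do not anticipate a serious obstacle: the entire content is that ``closed under state transition and control generation'' coincides with closedness under taking $\mc{G}_{SO}$-predecessors, and the predecessor relation is transitive. The only points requiring care are the degenerate cases $a=b$ and $b=c$ (equivalently, the $\cup\lrs{v}$ term in the definition of $\mc{R}^v_{SO}$), and correctly interpreting the nested set-builder notation for $\I^i_Q$ as a union of reachability sets rather than an intersection --- although, as it happens, the same transitivity argument would go through for the intersection as well.
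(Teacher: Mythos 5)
Your proof is correct, but it takes a genuinely different route from the paper's. You treat the lemma as the purely combinatorial statement it literally is: reading the definition $\I^i_Q=\lrs{j\in\mc{R}^k_{SO}\,|\,\forall\,k\in\I^i_C}$ as $\bigcup_{m\in\I^i_C}\mc{R}^m_{SO}$ (the reading the paper itself confirms in the appendix, where it writes $\I^i_Q=\lrs{\mc{R}^k_{SO}}_{k\in\I^i_C}$), you reduce everything to transitivity of the reachability relation in $\mc{G}_{SO}$, with the degenerate cases $a=b$ and $b=c$ handled by the $\cup\lrs{v}$ term. The paper instead argues by contradiction through the \emph{semantics} of the coupling graphs: it unfolds the functional dependence of the stage cost $c_i(x_{\I^i_C}(t'),u_{\I^i_C}(t'))$ on $x_j,u_j$, substitutes the state-transition map $h_j$ (indexed by $\I^j_S$) and the policy map $g_j$ (indexed by $\I^j_O$), and recursively expands to conclude that the cost depends on $\lrs{x_s,u_s}_{s\in\mc{R}^j_{SO}}$, forcing $\mc{R}^j_{SO}\subseteq\I^i_Q$. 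What your approach buys is brevity and self-containedness: it needs no reference to dynamics, policies, or costs, and is manifestly valid for any interpretation of the index sets as reachability sets (including, as you note, the intersection reading). What the paper's approach buys is that it simultaneously justifies \emph{why} $\I^i_Q$ is the right set — i.e., it does part of the work of Theorem 3.1 and extends beyond the linear-quadratic setting — at the cost of a longer and somewhat more informal argument. Both proofs are sound; yours is arguably the more direct proof of the lemma as stated.
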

    
Theorem~\ref{thm:Qsetdecomp} below establishes that $Q_i(\cdot)$ depends \textit{only} on the states, and controls of the agents in $\I^i_Q$. Thus, we refer to $\I^i_Q$ as the \textit{value dependence set} of agent $i$ and the  graph such that the in-neighbors of node $i$ correspond to $\I^i_Q$ as the \textit{value dependency graph} denoted by $\mc{G}_{Q} = \{\mc{V},\mc{E}_{Q}\}.$
\begin{thm}[Value decomposition theorem]
    Let $\mc{R}^i_{SO} = \lrs{j\in \mc{V}|j \xrightarrow{\mc{E}_{SO}}i}\cup \lrs{i}$. Then, $\forall$ $i\in \mathcal{V}$, $Q_i(x(t),u(t)) = Q_i(x_{\I^i_Q}(t), u_{\I^i_Q}(t)), $ \text{where } $\I^i_Q = \lrs{j\in \mc{R}^k_{SO}| \forall~k\in \I^i_C}.$
    \label{thm:Qsetdecomp}
\end{thm}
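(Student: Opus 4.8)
The plan is to exploit the closure property recorded in Lemma~\ref{lm:rtc} and then run a forward induction on the joint state--control trajectory. As a first step I would distill two elementary consequences of the definition $\I^i_Q=\lrs{j\in\mc{R}^k_{SO}\mid\forall\,k\in\I^i_C}$ combined with Lemma~\ref{lm:rtc}: (i) $\I^i_C\subseteq\I^i_Q$, since each $k\in\I^i_C$ lies in $\mc{R}^k_{SO}$; and (ii) for every $j\in\I^i_Q$ one has $\I^j_S\subseteq\I^i_Q$ and $\I^j_O\subseteq\I^i_Q$, because any $l\in\I^j_S\cup\I^j_O$ contributes an edge $(l,j)\in\mc{E}_{SO}$, hence $l\in\mc{R}^j_{SO}$, and $\mc{R}^j_{SO}\subseteq\I^i_Q$ by Lemma~\ref{lm:rtc}. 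Property (ii) says that $\I^i_Q$ is simultaneously closed under taking dynamics-parents and observation-parents, which is exactly what the combined graph $\mc{G}_{SO}$ is designed to track.

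Next, fixing a realization of the process noise, I would prove by induction on $\tau\ge t$ that $x_{\I^i_Q}(\tau)$ and $u_{\I^i_Q}(\tau)$ are determined as deterministic functions of $x_{\I^i_Q}(t)$, $u_{\I^i_Q}(t)$, and $\lrs{w_j(s)}_{j\in\I^i_Q,\,t\le s<\tau}$. The base case $\tau=t$ holds by hypothesis. For the inductive step, by~\eqref{eq:inddyn} we have, for each $j\in\I^i_Q$, $x_j(\tau+1)=\sum_{l\in\I^j_S}\lrp{A_{jl}x_l(\tau)+B_{jl}u_l(\tau)}+w_j(\tau)$, which involves only indices $l\in\I^j_S\subseteq\I^i_Q$ that are known at time $\tau$; this yields $x_{\I^i_Q}(\tau+1)$. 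Since $\tau+1>t$, the controls are generated by the policy, $u_j(\tau+1)=K_j x_{\I^j_O}(\tau+1)$, involving only indices $l\in\I^j_O\subseteq\I^i_Q$; this yields $u_{\I^i_Q}(\tau+1)$, closing the induction.

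Finally, using (i), for every $\tau\ge t$ the stage cost $c_i(x_{\I^i_C}(\tau),u_{\I^i_C}(\tau))$ is a measurable function of $x_{\I^i_Q}(t)$, $u_{\I^i_Q}(t)$ and the noise up to time $\tau$ only; taking the expectation over the (independent) noise components and summing over $\tau$ shows that $Q_i(x(t),u(t))$ depends on $(x(t),u(t))$ only through $(x_{\I^i_Q}(t),u_{\I^i_Q}(t))$, which is the asserted identity $Q_i(x(t),u(t))=Q_i(x_{\I^i_Q}(t),u_{\I^i_Q}(t))$. I expect the only delicate point to be the bookkeeping at the initial time step: since $u(t)$ is an arbitrary action rather than $Kx(t)$, the induction must carry the control sub-vector forward explicitly instead of eliminating it via the policy, and one must verify that no coordinate outside $\I^i_Q$ ever leaks into the recursion --- which is precisely where property (ii) is used. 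Well-posedness of the infinite sum is inherited from the standing assumption that the evaluated policy is stabilizing, so it adds nothing essential to the argument.
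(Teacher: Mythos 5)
Your proof is correct and follows essentially the same route as the paper's: both unroll the cost-to-go recursion forward in time, track which agent indices can appear at each step, and invoke the closure property of Lemma~\ref{lm:rtc} (your consequence (ii) is exactly that closure specialized to the one-step dynamics- and observation-parents). The only difference is presentational: the paper carries out the unrolling as an explicit expansion of the quadratic $Q_i$, whereas your pathwise induction never uses the linear-quadratic structure and is, if anything, the cleaner formalization of the same argument.
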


Solving the optimization in~\eqref{eq:nmasopt} requires each agent $i\in\mc{V}$ to compute the gradient $\nabla_{K_i}J(\cdot)$ which is again a global computation. Theorem~\ref{thm:Qsetdecomp} implies that $K_i$ affects $Q_j$ only if $i \in \I^j_{Q}$, from which we pursue a decomposition of the gradient of the global objective w.r.t. (with respect to) $K_i.$  Define the \textit{gradient dependency} graph $\mc{G}_{\text{GD}} = \lrs{\mc{V}, \mc{E}^\intercal_{Q}}$ and the corresponding index set $\I^i_{\text{GD}} =\lrs{j\in \mc{V}|(j,i)\in \mc{E}^\intercal_{Q}}$.  Theorem~\ref{thm:graddecomp} below shows that the gradient of $Q(\cdot)$ w.r.t. $K_i$ can be decomposed as the  sum of the gradients of $Q_j$ w.r.t. $K_i$, $\forall$ $j \in \I^i_{\text{GD}}.$
\begin{thm}[Gradient decomposition theorem]
For the cooperative LQR problem defined in Section~\ref{sec:probform}, we have $\forall~t\geq 0$, $\forall$ $i \in \mathcal{V}$, \\
\vspace{-1em}
        $$\nabla_{K_i} Q(x(t),u(t)) = \nabla_{K_i} \left(\sum_{j \in \mathcal{I}_{\text{GD}}^i} Q_j(x_{\I^j_Q}(t),u_{\I^j_Q}(t))\right).$$
\vspace{-1.5em}
    \label{thm:graddecomp}
\end{thm}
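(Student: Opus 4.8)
The plan is to reduce the global gradient to a sum over local Q-functions and then eliminate every term in that sum on which $K_i$ cannot possibly act. Throughout, write $Q_j^K$ to make explicit that each local Q-function is parametrized by the full control gain $K$ through the closed-loop rollout of $\pi$, with the pair $(x(t),u(t))$ being the (fixed) point at which the function is evaluated. Since $Q(x(t),u(t)) = \sum_{j=1}^N Q_j(x(t),u(t))$ by definition and $\nabla_{K_i}$ is linear, we get $\nabla_{K_i} Q(x(t),u(t)) = \sum_{j=1}^N \nabla_{K_i} Q_j(x(t),u(t))$ at once.

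The heart of the argument is to show $\nabla_{K_i} Q_j^K \equiv 0$ whenever $i \notin \I^j_Q$. By Theorem~\ref{thm:Qsetdecomp}, $Q_j^K(x(t),u(t)) = Q_j^K(x_{\I^j_Q}(t),u_{\I^j_Q}(t))$, so it depends on the \emph{arguments} only through the sub-blocks indexed by $\I^j_Q$; it remains to check that it depends on the \emph{parameter} $K$ only through $\{K_k\}_{k\in\I^j_Q}$. This follows from a straightforward induction on time using Lemma~\ref{lm:rtc}: because $\I^j_Q$ is closed under state transition and control generation, for $t'\geq t$ the states $\{x_k(t')\}_{k\in\I^j_Q}$ are propagated solely by the blocks $\{A_{kl},B_{kl}\}_{k,l\in\I^j_Q}$, and since $\mc{G}_{SO}$ folds in the observation edges, each control $u_k(t') = K_k x_{\I^k_O}(t')$ with $k\in\I^j_Q$ reads only states indexed in $\I^j_Q$ and uses only $K_k$ with $k\in\I^j_Q$. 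Hence $Q_j^K$ is unaffected by $K_i$ unless $i\in\I^j_Q$. Unwinding the graph conventions --- $(k,i)\in\mc{E}_Q \iff k\in\I^i_Q$, so $(i,k)\in\mc{E}_Q^\intercal \iff k\in\I^i_Q$, hence $\I^i_{\text{GD}} = \{j\in\mc{V}: (j,i)\in\mc{E}_Q^\intercal\} = \{j\in\mc{V}: i\in\I^j_Q\}$ --- the set of indices carrying a possibly nonzero gradient term is exactly $\I^i_{\text{GD}}$.

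Putting the pieces together, $\nabla_{K_i} Q(x(t),u(t)) = \sum_{j\in\I^i_{\text{GD}}} \nabla_{K_i} Q_j(x(t),u(t))$; replacing each surviving $Q_j(x(t),u(t))$ by $Q_j(x_{\I^j_Q}(t),u_{\I^j_Q}(t))$ via Theorem~\ref{thm:Qsetdecomp} and pulling $\nabla_{K_i}$ back out of the finite sum by linearity gives the claimed identity. The quantifier ``$\forall~t\geq 0$'' comes for free, since the reasoning is pointwise in $(x,u)$ and the infinite-horizon $Q$-functions are time-invariant.

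I expect the main obstacle to be the precise justification that $Q_j^K$ depends on $K$ only through $\{K_k\}_{k\in\I^j_Q}$ --- in particular, making sure the observation couplings are absorbed into $\mc{G}_{SO}$ so that the restricted closed loop is genuinely autonomous, which is where Lemma~\ref{lm:rtc} (rather than just Theorem~\ref{thm:Qsetdecomp}) must be invoked. A secondary subtlety is bookkeeping the orientation of $\mc{G}_{\text{GD}} = \{\mc{V},\mc{E}_Q^\intercal\}$ carefully, so that the index set of non-vanishing terms lands exactly on $\I^i_{\text{GD}}$ rather than on $\I^i_Q$.
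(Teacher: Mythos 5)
Your proposal is correct and follows essentially the same route as the paper's proof: both split $Q=\sum_j Q_j$ into the terms indexed by $\I^i_{\text{GD}}$ and its complement, and kill the complement by observing that $j\notin\I^i_{\text{GD}}$ is equivalent to $i\notin\I^j_Q$, so those $Q_j$ are independent of $K_i$. Your treatment is slightly more explicit than the paper's on the one genuinely delicate point --- that $Q_j$ must be independent of $K_i$ as a \emph{parameter} of the closed-loop rollout, not merely of the evaluation arguments, which is exactly where the closure property of Lemma~\ref{lm:rtc} is needed --- but this is a refinement of the same argument, not a different one.
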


Owing to Theorem~\ref{thm:graddecomp}, we further define $\I^i_{\widehat{Q}} \triangleq \lrs{k \in \mc{V}| k \in \bigcup_{j\in \I^i_{\text{GD}}} \I^j_{Q}}$ and $\widehat{Q}_i(x_{\I^i_{\widehat{Q}}},u_{\I^i_{\widehat{Q}}})\triangleq\sum_{j\in \I^i_{\text{GD}}} Q_j(x_{\I^j_Q},u_{\I^j_Q})$.  
The multi-agent deterministic policy gradient theorem~\cite{lowe2017} is an extension of the deterministic policy gradient theorem~\cite{silver14} to an MARL setting that gives the gradient of the objective $J(\cdot)$ w.r.t. the policy parameters of agent $i$ as $\nabla_{K_i} J(x,u) = \mathbb{E}\left[\nabla_{u_i} Q(x,u) \nabla_{K_i} u_i\right].$
Applying Theorem~\ref{thm:Qsetdecomp} and~\ref{thm:graddecomp} 
yields our cooperative deterministic policy gradient: $\forall~i\in\mc{V}$, 
\vspace{-0.7em}
\begin{align}
  \nabla_{K_i} J(x,u) = \mathbb{E}\left[ \nabla_{K_i} u_i \cdot \nabla_{u_i} \widehat{Q}_i(x_{\I^i_{\widehat{Q}}},u_{\I^i_{\widehat{Q}}})\right]
   =\mathbb{E}\left[ \nabla_{K_i} u_i \cdot \nabla_{u_i} \sum_{j\in \I^i_{\text{GD}}} Q_j(x_{\I^j_Q},u_{\I^j_Q})\right]. 
  \label{eq:strpgt}
\end{align}

In the remainder of the paper, we focus on two architectures for 
computing the policy gradient: 1) the \textit{direct} case (using the first equality in~\eqref{eq:strpgt}), where agent $i$, $\forall i\in \mc{V}$, directly estimates $\widehat{Q}_i(x_{\I^i_{\widehat{Q}}},u_{\I^i_{\widehat{Q}}})$ as a quadratic function of $x_{\I^i_{\widehat{Q}}},u_{\I^i_{\widehat{Q}}}$; 2) the \textit{indirect} case (using the second equality in~\eqref{eq:strpgt}), where agent $i$, $\forall i\in \mc{V}$, estimates $Q_i$ as a quadratic function of $x_{\I^i_{{Q}}},u_{\I^i_{{Q}}}$ and communicates with the agents in $\mc{G}_{\text{GD}}$ to compute~$\sum_{j\in \I^i_{\text{GD}}} Q_j(x_{\I^j_Q},u_{\I^j_Q})$. 
\section{Multi-agent structured least square policy iteration}\label{sec:alg}
In this section, we present the \textit{multi-agent least square policy iteration (MALSPI)} to solve~\eqref{eq:nmasopt} in a model-free setting. This algorithm extends the least square policy iteration~(\cite{lagoudakis2003least})--which is well understood both theoretically and empirically in a single agent setting--to MAS utilizing the decomposition in Theorem~\ref{thm:Qsetdecomp} and~\ref{thm:graddecomp}. The proposed MALSPI algorithm is an off-policy algorithm that employs a {shared experience buffer}. In each iteration, a trajectory rollout of $T$ samples is collected using a stabilizing policy $K_{\text{play}}(\neq K${, in general}$)$. We assume that $K_{\text{play}}$ is either known (e.g., for open-loop stable dynamics) or learned in a model-free setting (e.g.,~\cite{jing2021learning}).  Then each agent performs consecutive policy evaluation and policy improvement steps in parallel. The proposed MALSPI algorithm for the \textit{direct} case is summarized in Algorithm~\ref{alg:malspi}.

\begin{algorithm}[htpb]
 \caption{Multi-agent Least Square Policy Iteration (MALSPI) - the direct case}
 \label{alg:malspi} 
Input: Initial stabilizing controller $K_0$,  number of policy iterations $n$,  length of trajectory rollout $T$,  exploration noise variance $\sigma^2_\eta$, lower eigenvalue bound $\zeta$ , learning rate parameter $\alpha$, direct VD set $\I^i_{\widehat{Q}},$ $\forall~i\in \mc{V}$, global initial state mean $x_0$, and covariance $\Sigma_0$.\\
\For{$l = 0,\cdots,n$}{
Starting from a global state $x^{(l)}(0) \sim \mc{N}(x_0,\Sigma_0)$, and using an arbitrary policy $u^{(l)}(t) = K_0 x^{(l)}(t) \!+\! \eta^{(l)}_t,~\eta^{(l)}(t)\sim \mc{N}(\mbf{0},\sigma^2_\eta \mb{I}_{Nn_u})$, collect sample global trajectory $\mc{D}^{(l)} = \{x^{(l)}(t), u^{(l)}(t), x^{(l)}(t\!+\!1)\}_{t=1}^T.$\\
\For{$i = 1,2,\cdots, N$ (in parallel)}{
Query $\mc{D}^{(l)}_{\I^i_{\widehat{Q}}} = \lrs{x^{(l)}_{\I^i_{\widehat{Q}}}(t),u^{(l)}_{\I^i_{\widehat{Q}}}(t),x^{(l)}_{\I^i_{\widehat{Q}}}(t\!+\!1)}_{t=0}^T$ from $\mc{D}^{(l)}$\\
$\hat{q}_i \leftarrow \text{LSTDQ} \lrp{\mc{D}^{(l)}_{\I^i_{\widehat{Q}}}, K^{(l)}_{\I^i_{\widehat{Q}}}}$ [\eqref{eq:eiv_lse}]; $\widehat{Q}_i = \text{psd\_proj}_\zeta\lrp{\text{smat}(\hat{q}_i)}$\\
$K^{(l\!+\!1)}_i \leftarrow K^{(l)}_i - 2\alpha \mathbb{E}\left[ \mb{J}_i\widehat{Q}_i\begin{bmatrix}
        x_{\I^i_{\widehat{Q}}}\\u_{\I^i_{\widehat{Q}}}
    \end{bmatrix}  x_{\I^i_O}^\T\right]$ [\eqref{eq:detpgtexplicit}]
}
}
\end{algorithm}

\noindent
\textbf{Policy evaluation.}
We discuss the policy evaluation step in the direct case. The analysis extends to the indirect case in a straightforward manner. We assume that each agent $i\in \mc{V}$ has access to the evaluation policies of the agents in its $\I^i_{\widehat{Q}}.$ Given $\bigcup_{j\in \I^i_{\widehat{Q}}} K_j$, agent $i\in \mc{V}$ estimates its corresponding Q-function using least squares temporal difference learning for Q-functions (LSTDQ)~(\cite{lagoudakis2003least}). To simplify the notation, define a projection operator $P^n_{\mc{S}_1, \mc{S}_2} \in \R^{n|\mc{S}_1|\times n|\mc{S}_2|}$ w.r.t. ordered subsets $\mc{S}_1\subseteq \mc{S}_2 \subseteq \mc{V}$ such that $\forall~i\in \mc{S}_1,~j\in\mc{S}_2$, $P_{ij}\in \R^{n \times n}$ satisfies $P_{ij} = \begin{cases}
   \mb{I}_n, & \text{ if $j\in \mc{S}_1$}\\
   \mbf{0}_{n\times n},& \text{ otherwise}
\end{cases}.$ Since  $\forall~j\in \I^i_{\widehat{Q}}$, $\I^j_O\subseteq \I^i_{\widehat{Q}}$ and $\I^j_S \subseteq \I^i_{\widehat{Q}}$ by Lemma~\ref{lm:rtc}, let $K_{\I^i_{\widehat{Q}}} = [\cdots~(K_j P^{n_x}_{\I^j_O,\I^i_{\widehat{Q}}})^\intercal~\cdots]^\T, A_{\I^i_{\widehat{Q}}} = [\cdots~(A_j P^{n_x}_{\I^j_S,\I^i_{\widehat{Q}}})^\intercal~\cdots]^\T$, $B_{\I^i_{\widehat{Q}}} = [\cdots~(B_j P^{n_u}_{\I^j_S,\I^i_{\widehat{Q}}})^\intercal~\cdots]^\T.$
It then follows that $\forall~t \geq 0$, $\forall~i\in \mc{V}$,  $\forall j\in \I^i_{\widehat{Q}}$,
\begin{align}
   x_{\I^i_{\widehat{Q}}}(t+1) = A_{\I^i_{\widehat{Q}} }x_{\I^i_{\widehat{Q}}}(t) + B_{\I^i_{\widehat{Q}}} u_{\I^i_{\widehat{Q}}}(t) + w_{\I^i_{\widehat{Q}}}(t);~~
   u_{\I^i_{\widehat{Q}}} (t) = K_{\I^i_{\widehat{Q}}} x_{\I^i_{\widehat{Q}}}(t). 
\end{align}
According to the Bellman equation in an infinite horizon average cost MDP~(\cite{bertsekas2007dynamic}), $\widehat{Q}_i(x_{\I^i_{\widehat{Q}}}(t), u_{\I^i_{\widehat{Q}}}(t))$, $\forall i\in \mc{V}$, corresponding to the global policy {$\pi$} satisfies the fixed point equation
\begin{align}
    \lambda + \widehat{Q}_i(x_{\I^i_{\widehat{Q}}}(t), u_{\I^i_{\widehat{Q}}}(t)) = \sum_{j\in \I^i_{\text{GD}}} c_j(x_{\I^i_{C}}(t), u_{\I^i_{C}}(t)) + \mathbb{E}\big[\widehat{Q}_i(x_{\I^i_{\widehat{Q}}}(t+1), K_{\I^i_{\widehat{Q}}} x_{\I^i_{\widehat{Q}}}(t+1))\big],
    \label{eq:bellman}
\end{align}
where $\lambda \in \R$ is a free parameter to satisfy the fixed point equation. Assuming a \textit{linear architecture}, $\widehat{Q}_i(x_{\I^i_{\widehat{Q}}}, u_{\I^i_{\widehat{Q}}})$ is parameterized as $\widehat{Q}_i(x_{\I^i_{\widehat{Q}}}, u_{\I^i_{\widehat{Q}}}) = \hat{q}_i^\T \phi_i(x_{\I^i_{\widehat{Q}}}, u_{\I^i_{\widehat{Q}}}),$ where $\phi_i(\cdot)$ is some known (possibly nonlinear) basis function of the state and control and $\hat{q}_i$ are unknown parameters. 
\begin{prop}
Consider the cooperative LQR problem in~\eqref{eq:nmasopt}. For any $i\in \mc{V}$, if
    $\hat{q}_i = \text{svec}\lrp{\widehat{Q}_i},\\ \lambda = \left\langle\widehat{Q}_i, \sigma^2_w\begin{bmatrix}
    \mb{I}\\ K_{\I^i_{\widehat{Q}}}
\end{bmatrix}\begin{bmatrix}
    \mb{I}\\ K_{\I^i_{\widehat{Q}}}
\end{bmatrix}^\T\right\rangle,~\phi_i(x_{\I^i_{\widehat{Q}}}, u_{\I^i_{\widehat{Q}}}) = \text{svec}\lrp{   \begin{bmatrix}
        x_{\I^i_{\widehat{Q}}}\\
        u_{\I^i_{\widehat{Q}}}
    \end{bmatrix}\begin{bmatrix}
        x_{\I^i_{\widehat{Q}}}\\
        u_{\I^i_{\widehat{Q}}}
    \end{bmatrix}^\T },\text{ and}$\\
   $\widehat{Q}_i = \begin{bmatrix}S_{\I^i_{\widehat{Q}}}&0\\0&R_{\I^i_{\widehat{Q}}}\end{bmatrix} + \begin{bmatrix}A^\T_{\I^i_{\widehat{Q}}}\\B^\T_{\I^i_{\widehat{Q}}}\end{bmatrix}  \mathcal{L}\lrp{A_{\I^i_{\widehat{Q}}}\!+\! B_{\I^i_{\widehat{Q}}} K_{\I^i_{\widehat{Q}}}, S_{\I^i_{\widehat{Q}}}\! +\! K^\T_{\I^i_{\widehat{Q}}} R_{\I^i_{\widehat{Q}}} K_{\I^i_{\widehat{Q}}}}\lrb{A_{\I^i_{\widehat{Q}}}~~B_{\I^i_{\widehat{Q}}}},$
then the linear parameterization $~\widehat{Q}_i(x_{\I^i_{\widehat{Q}}}, u_{\I^i_{\widehat{Q}}}) = \hat{q}_i^\T \phi_i(x_{\I^i_{\widehat{Q}}}, u_{\I^i_{\widehat{Q}}})$ satisfies~\eqref{eq:bellman}.\label{prop:linarch}
\end{prop}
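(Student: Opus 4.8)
The plan is to verify directly that the proposed closed-form expressions for $\hat{q}_i$, $\lambda$, $\phi_i$, and $\widehat{Q}_i$ satisfy the Bellman fixed-point equation~\eqref{eq:bellman}. The starting point is the reduced linear subsystem on the states and controls indexed by $\I^i_{\widehat{Q}}$, namely $x_{\I^i_{\widehat{Q}}}(t+1) = A_{\I^i_{\widehat{Q}}} x_{\I^i_{\widehat{Q}}}(t) + B_{\I^i_{\widehat{Q}}} u_{\I^i_{\widehat{Q}}}(t) + w_{\I^i_{\widehat{Q}}}(t)$ with $u_{\I^i_{\widehat{Q}}}(t) = K_{\I^i_{\widehat{Q}}} x_{\I^i_{\widehat{Q}}}(t)$, which is valid by Theorem~\ref{thm:Qsetdecomp} (closure of $\I^i_{\widehat{Q}}$ under the dynamics and control generation, as noted via Lemma~\ref{lm:rtc}). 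Because $\widehat{Q}_i$ is parameterized as a quadratic form, writing $z = [x_{\I^i_{\widehat{Q}}}^\T~u_{\I^i_{\widehat{Q}}}^\T]^\T$, we have $\widehat{Q}_i(z) = z^\T \widehat{Q}_i z = \langle \widehat{Q}_i, zz^\T\rangle = \hat{q}_i^\T \phi_i$ by the definition of $\text{svec}$/$\text{smat}$ and the identity $\|M\|_F^2 = \langle \text{svec}(M),\text{svec}(M)\rangle$; this establishes the claimed linear parameterization form.

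The core computation is to expand the right-hand side of~\eqref{eq:bellman}. First I would rewrite the one-step cost: since the reduced cost matrices satisfy $\sum_{j\in\I^i_{\text{GD}}} c_j(x_{\I^i_C},u_{\I^i_C}) = x_{\I^i_{\widehat{Q}}}^\T S_{\I^i_{\widehat{Q}}} x_{\I^i_{\widehat{Q}}} + u_{\I^i_{\widehat{Q}}}^\T R_{\I^i_{\widehat{Q}}} u_{\I^i_{\widehat{Q}}}$ (here one uses that $\widehat{Q}_i = \sum_{j\in\I^i_{\text{GD}}}Q_j$ aggregates exactly the costs appearing on the RHS, and that the projection/padding by the $P$ operators matches the block structure), this equals $z^\T \text{blkdiag}(S_{\I^i_{\widehat{Q}}}, R_{\I^i_{\widehat{Q}}}) z$. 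Next, substitute $u_{\I^i_{\widehat{Q}}}(t+1) = K_{\I^i_{\widehat{Q}}} x_{\I^i_{\widehat{Q}}}(t+1)$ into $\widehat{Q}_i(x_{\I^i_{\widehat{Q}}}(t+1), \cdot)$, so the next-state term becomes $x_{\I^i_{\widehat{Q}}}(t+1)^\T \begin{bmatrix}\mb{I}\\ K_{\I^i_{\widehat{Q}}}\end{bmatrix}^\T \widehat{Q}_i \begin{bmatrix}\mb{I}\\ K_{\I^i_{\widehat{Q}}}\end{bmatrix} x_{\I^i_{\widehat{Q}}}(t+1)$; call the inner matrix $\bar{Q}_i$. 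Substituting the dynamics $x_{\I^i_{\widehat{Q}}}(t+1) = [A_{\I^i_{\widehat{Q}}}~~B_{\I^i_{\widehat{Q}}}] z + w_{\I^i_{\widehat{Q}}}(t)$ and taking the expectation over $w_{\I^i_{\widehat{Q}}}(t)\sim\mc{N}(0,\sigma_w^2\mb{I})$ splits the term into a deterministic quadratic part $z^\T [A_{\I^i_{\widehat{Q}}}~~B_{\I^i_{\widehat{Q}}}]^\T \bar Q_i [A_{\I^i_{\widehat{Q}}}~~B_{\I^i_{\widehat{Q}}}] z$ plus the noise part $\sigma_w^2\,\text{Tr}(\bar Q_i) = \langle \widehat{Q}_i, \sigma_w^2 \begin{bmatrix}\mb{I}\\ K_{\I^i_{\widehat{Q}}}\end{bmatrix}\begin{bmatrix}\mb{I}\\ K_{\I^i_{\widehat{Q}}}\end{bmatrix}^\T\rangle$, which is exactly the prescribed $\lambda$, so $\lambda$ cancels the free term on the left.

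What remains is to match the purely quadratic-in-$z$ terms: I need $\widehat{Q}_i = \text{blkdiag}(S_{\I^i_{\widehat{Q}}}, R_{\I^i_{\widehat{Q}}}) + [A_{\I^i_{\widehat{Q}}}~~B_{\I^i_{\widehat{Q}}}]^\T \bar Q_i [A_{\I^i_{\widehat{Q}}}~~B_{\I^i_{\widehat{Q}}}]$ as a matrix identity. Writing $A_{cl} = A_{\I^i_{\widehat{Q}}} + B_{\I^i_{\widehat{Q}}} K_{\I^i_{\widehat{Q}}}$, one observes $[A_{\I^i_{\widehat{Q}}}~~B_{\I^i_{\widehat{Q}}}]\begin{bmatrix}\mb{I}\\ K_{\I^i_{\widehat{Q}}}\end{bmatrix} = A_{cl}$, so $[A_{\I^i_{\widehat{Q}}}~~B_{\I^i_{\widehat{Q}}}]^\T \bar Q_i [A_{\I^i_{\widehat{Q}}}~~B_{\I^i_{\widehat{Q}}}] = [A_{\I^i_{\widehat{Q}}}~~B_{\I^i_{\widehat{Q}}}]^\T A_{cl}^{-\cdots}$—more precisely, defining $\mathcal{P} = \mc{L}(A_{cl}, S_{\I^i_{\widehat{Q}}} + K_{\I^i_{\widehat{Q}}}^\T R_{\I^i_{\widehat{Q}}} K_{\I^i_{\widehat{Q}}})$ as the solution of $\mathcal{P} = A_{cl}\mathcal{P} A_{cl}^\T + (S_{\I^i_{\widehat{Q}}} + K_{\I^i_{\widehat{Q}}}^\T R_{\I^i_{\widehat{Q}}} K_{\I^i_{\widehat{Q}}})$, I would check that the proposed $\widehat{Q}_i = \text{blkdiag}(S_{\I^i_{\widehat{Q}}}, R_{\I^i_{\widehat{Q}}}) + [A_{\I^i_{\widehat{Q}}}~~B_{\I^i_{\widehat{Q}}}]^\T \mathcal{P} [A_{\I^i_{\widehat{Q}}}~~B_{\I^i_{\widehat{Q}}}]$ satisfies the recursion by substituting it back: compute $\bar Q_i = \begin{bmatrix}\mb{I}\\ K_{\I^i_{\widehat{Q}}}\end{bmatrix}^\T \widehat{Q}_i \begin{bmatrix}\mb{I}\\ K_{\I^i_{\widehat{Q}}}\end{bmatrix} = (S_{\I^i_{\widehat{Q}}} + K_{\I^i_{\widehat{Q}}}^\T R_{\I^i_{\widehat{Q}}} K_{\I^i_{\widehat{Q}}}) + A_{cl}^\T \mathcal{P} A_{cl} = \mathcal{P}$ by the Lyapunov equation, and then $\text{blkdiag}(S_{\I^i_{\widehat{Q}}}, R_{\I^i_{\widehat{Q}}}) + [A_{\I^i_{\widehat{Q}}}~~B_{\I^i_{\widehat{Q}}}]^\T \bar Q_i [A_{\I^i_{\widehat{Q}}}~~B_{\I^i_{\widehat{Q}}}] = \text{blkdiag}(S_{\I^i_{\widehat{Q}}}, R_{\I^i_{\widehat{Q}}}) + [A_{\I^i_{\widehat{Q}}}~~B_{\I^i_{\widehat{Q}}}]^\T \mathcal{P} [A_{\I^i_{\widehat{Q}}}~~B_{\I^i_{\widehat{Q}}}] = \widehat{Q}_i$, closing the loop. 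The main obstacle is the bookkeeping in the first paragraph's claim that the aggregated cost and the reduced dynamics restrict consistently to the $\I^i_{\widehat{Q}}$-block — i.e., verifying that the $P$-projection padding of $A_j$, $B_j$, $K_j$, $S_i$, $R_i$ genuinely yields matrices acting on the common coordinate system $x_{\I^i_{\widehat{Q}}}$ with no cross terms left out, which relies on $\I^j_O, \I^j_S \subseteq \I^i_{\widehat{Q}}$ and $\I^j_C \subseteq \I^i_{\widehat{Q}}$ for $j \in \I^i_{\text{GD}}$, all consequences of Lemma~\ref{lm:rtc} and the definition of $\I^i_{\widehat{Q}}$; once that is set up, the algebra is the standard LQR value-function identity applied to the reduced subsystem.
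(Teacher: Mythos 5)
Your proposal is correct and takes essentially the same route as the paper's proof: expand the expected next-step $Q$-value through the reduced dynamics on $\I^i_{\widehat{Q}}$, identify the noise contribution with the trace term defining $\lambda$, and close the purely quadratic-in-$(x_{\I^i_{\widehat{Q}}},u_{\I^i_{\widehat{Q}}})$ part via the Lyapunov solution. The only cosmetic difference is that you verify the candidate $\widehat{Q}_i$ by a one-step substitution into the fixed-point equation, whereas the paper derives the same matrix by recursively unrolling the Bellman recursion; both arguments turn on the same identity relating $\widehat{Q}_i$ to $\mathcal{L}\lrp{A_{\I^i_{\widehat{Q}}}+B_{\I^i_{\widehat{Q}}}K_{\I^i_{\widehat{Q}}},\,S_{\I^i_{\widehat{Q}}}+K^\T_{\I^i_{\widehat{Q}}}R_{\I^i_{\widehat{Q}}}K_{\I^i_{\widehat{Q}}}}$.
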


Denote $\phi_t = \text{svec}\left(
    \begin{bmatrix}
        x_{\I^i_{\widehat{Q}}}(t)\\
        u_{\I^i_{\widehat{Q}}}(t)
    \end{bmatrix} \begin{bmatrix}
        x_{\I^i_{\widehat{Q}}}(t)\\
        u_{\I^i_{\widehat{Q}}}(t)
\end{bmatrix}^\intercal
    \right),$
$\psi_t = \text{svec}\left(
    \begin{bmatrix}
        x_{\I^i_{\widehat{Q}}}(t)\\
        K_{\I^i_{\widehat{Q}}} x_{\I^i_{\widehat{Q}}}(t)
    \end{bmatrix} \begin{bmatrix}
        x_{\I^i_{\widehat{Q}}}(t)\\
        K_{\I^i_{\widehat{Q}}} x_{\I^i_{\widehat{Q}}}(t)
\end{bmatrix}^\intercal
    \right),$ $f = \text{svec}\left(\sigma_w^2 \begin{bmatrix}
    \mb{I}\\ K_{\I^i_{\widehat{Q}}}
\end{bmatrix}\begin{bmatrix}
    \mb{I}\\ K_{\I^i_{\widehat{Q}}}
\end{bmatrix}^\T\right),$ $\xi_t = \mathbb{E}\left[\text{svec}\left(
    \begin{bmatrix}
        x_{\I^i_{\widehat{Q}}}(t\!+\!1)\\
        u_{\I^i_{\widehat{Q}}}(t\!+\!1)
    \end{bmatrix} \begin{bmatrix}
        x_{\I^i_{\widehat{Q}}}(t\!+\!1)\\
        u_{\I^i_{\widehat{Q}}}(t\!+\!1)
\end{bmatrix}^\intercal
    \right)\right]$, and $\hat{c}^i_t =$ \\$ \sum_{j\in \I^i_{\text{GD}}} c_j(x_{\I^i_{C}}(t), u_{\I^i_{C}}(t))$ . With a slight abuse of notation, we use $\widehat{Q}_i$ to denote both the Q-function and the matrix parameterizing it.   Then,~\eqref{eq:bellman} can be rewritten as
\begin{align}
  & \hat{c}^i_t = \lambda + (\phi_t - \xi_t)\text{svec}(\widehat{Q}_i).
    \label{eq:linarchbellman}
\end{align}
Since LSTDQ is an off-policy method, $\forall~i\in \mc{V}$, we prescribe an arbitrary control law $u_{\I^i_{\widehat{Q}}}(t) = K^\text{play}_{\I^i_{\widehat{Q}}}x_{\I^i_{\widehat{Q}}}(t) + \eta_{\I^i_{\widehat{Q}}}(t),$  to generate a single trajectory $\{x_{\I^i_{\widehat{Q}}}(t),u_{\I^i_{\widehat{Q}}}(t),x_{\I^i_{\widehat{Q}}}(t\!+\!1)\}_{t=1}^T$, where $K^\text{play}_{\I^i_{\widehat{Q}}}$ may be different from $K_{\I^i_{\widehat{Q}}}$, and $\eta_{\I^i_{\widehat{Q}}}(t) \sim \mc{N}\lrp{\mbf{0}, \sigma^2_\eta \mb{I}_{n_u|\I^i_{\widehat{Q}}|}}$ is a sufficiently exciting exploration noise for learning. 
We utilize the version of LSPI in~\cite{krauth2019finite}, where new trajectory of samples are collected every iteration.
Then~\eqref{eq:linarchbellman} can be expressed in matrix form as
\begin{align}
    \mathbf{\hat{c}}_i = (\mathbf{\Phi} - \mathbf{\Xi} + \mathbf{F})\hat{q}^{\text{true}}_i,
    \label{eq:eiv_ls}
\end{align}
where
$\hat{q}^{\text{true}}_i = \text{svec}(\widehat{Q}^{\text{true}}_i)$, $\mathbf{\Phi}^\T = [ \phi_1, \phi_2,\cdots, \phi_T]
$, $\mathbf{\Xi}^\T =[ \xi_1, \xi_2, \cdots, \xi_T]$, $\mathbf{\hat{c}}^\T_i = [ \hat{c}^i_1,\hat{c}^i_2, \cdots,\hat{c}^i_T]$, $\mathbf{F}^\T = [f_1, f_2,\cdots, f_T]$,  and $\mathbf{\Psi}^\T_+ = [\psi_2,\psi_3,\cdots, \psi_{T+1}]$.
A least-squares solution to $\hat{q}^{\text{true}}_i$ in the error-in-variables problem~\eqref{eq:eiv_ls} is given by 
\begin{align}
    \hat{q}_i = 
(\mathbf{\Phi}^\intercal (\mathbf{\Phi} - \mathbf{\Psi}_+ + \mathbf{F}))^{-1} \mathbf{\Phi}^\intercal \mathbf{\hat{c}}_i.
\label{eq:eiv_lse}
\end{align}
To ensure that the least-squares estimate smat$(\hat{q}_i)$ is positive semi-definite, we perform a Euclidean projection onto the set of symmetric matrices lower bounded by $\zeta\cdot \mb{I}$ as in~\cite{krauth2019finite}.

\noindent
\textbf{Policy improvement}
The policy improvement step in the LSPI algorithms in the unstructured single agent setting have a closed-form update. However, in our multi-agent setting, a closed-form update is not possible due to the observation constraints of the agents in $\mathcal{G}_O$. Instead, agent $i\in \mc{V}$ updates its policy $K_i$ using~\eqref{eq:strpgt} as 
\begin{align}
    K_i \leftarrow K_i - \alpha \mathbb{E}\left[ \nabla_{K_i} u_i \cdot \nabla_{u_i} \widehat{Q}_i(x_{\I^i_{\widehat{Q}}},u_{\I^i_{\widehat{Q}}})\right]=K_i - 2\alpha \mathbb{E}\left[ \mb{J}_i\widehat{Q}_i\begin{bmatrix}
        x_{\I^i_{\widehat{Q}}}\\u_{\I^i_{\widehat{Q}}}
    \end{bmatrix}  x_{\I^i_O}^\T\right],~\forall~i\in\mc{V},\label{eq:detpgtexplicit}
\end{align}
where $\alpha$ is the learning rateand $\mb{J}_i \in \R^{n_u \times (n_x + n_u)|\I^i_{\widehat{Q}}|}$ is a matrix where the sub-matrix corresponding to $u_i$ is $\mb{I}_{n_u}$ and zero otherwise.

\section{Theoretical Analysis}\label{sec:theory}
We present our main result on the sample complexity of the VD set decomposition for~\eqref{eq:nmasopt}. Recall from~\cite{krauth2019finite} that a square matrix $L$ is $(\tau,\rho)$-stable if $\forall~k\in \mb{Z}_{\geq 0},$ $\norm{L^k} \leq \tau \rho^k$, where $\tau\geq 1$ and $\rho\in(0,1)$.
Let $n^i_{\widehat{x}} = n_x |\I^i_{\widehat{Q}}|,$ $n^i_{\widehat{u}} = n_u |\I^i_{\widehat{Q}}|$. 
Theorem~\ref{thm:undecsampcomp} below states the sample complexity and the estimation error of the Q-function parameter for the \textit{direct} case.
\begin{thm}
Consider $\delta\in (0,1).$  Let the initial global state and the global control (during sample generation) $\forall~t$ satisfy $x(0) \sim \mc{N}\lrp{x_0, \Sigma_0},$ $u(t) = K^{\text{play}} x(t) \!+\! \eta_t,~\eta(t)\sim \mc{N}(\mbf{0}_{Nn_u},\sigma^2_\eta \mb{I}_{Nn_u}),$ and $\sigma_\eta \leq \sigma_w$. For each $i\in \mc{V},$ let $K^{\text{play}}_{\I^i_{\widehat{Q}}},~K_{\I^i_{\widehat{Q}}}$ stabilize $(A_{\I^i_{\widehat{Q}}},B_{\I^i_{\widehat{Q}}}).$ Assume that $A_{\I^i_{\widehat{Q}}} + B_{\I^i_{\widehat{Q}}}K_{\I^i_{\widehat{Q}}}$ and $A_{\I^i_{\widehat{Q}}} + B_{\I^i_{\widehat{Q}}}K^{\text{play}}_{\I^i_{\widehat{Q}}}$ are $(\tau,\rho)$-stable. 
Let $\mathfrak{P}_\infty = \mc{L}\lrp{A_{\I^i_{\widehat{Q}}}+B_{\I^i_{\widehat{Q}}}K_{\I^i_{\widehat{Q}}}, \sigma^2_w \mb{I}_{n^i_{\widehat{x}}} + \sigma^2_\eta B_{\I^i_{\widehat{Q}}} B^\T_{\I^i_{\widehat{Q}}}}$ and $\bar{\sigma}_i = \sqrt{\tau^2 \rho^{4}||\Sigma^{\widehat{x}}_0|| + ||\mathfrak{P}_\infty|| + \sigma^2_w + \sigma^2_\eta||B_{\I^i_{\widehat{Q}}}||^2}.$ 
Suppose that $T$ satisfies
$$T \geq \tilde{O}(1) \text{max}\bigg\{(n^i_{\widehat{x}} + n^i_{\widehat{u}})^2, \dfrac{( n^i_{\widehat{x}})^2(n^i_{\widehat{x}} + n^i_{\widehat{u}})^2 ||\widehat{K}^{\text{play}}_{\I^i_{\widehat{Q}}}||^4_+}{\sigma^4_\eta}\sigma^2_w \bar{\sigma}^2_i\dfrac{\tau^4||K_{\I^i_{\widehat{Q}}}||^8_+ {{(||A_{\I^i_{\widehat{Q}}}||^2 + ||B_{\I^i_{\widehat{Q}}}||^2)^2}}}{\rho^4(1-\rho^2)^2}\bigg\}.$$
Then with probability at least $1-\delta$, we have 
$$||\hat{q}^{\text{true}}_i -\hat{q}^{\text{direct}}_i|| \leq  \dfrac{\tilde{O}(1) (n^i_{\widehat{x}} + n^i_{\widehat{u}})||K^{\text{play}}_{\I^i_{\widehat{Q}}}||^2_+ }{\sigma^2_\eta
     \sqrt{T}}\sigma_w \bar{\sigma}_i||\widehat{Q}^{\text{true}}_i||_F \dfrac{\tau^2||K_{\I^i_{\widehat{Q}}}||^4_+ {{(||A_{\I^i_{\widehat{Q}}}||^2 + ||B_{\I^i_{\widehat{Q}}}||^2)}}}{\rho^2(1-\rho^2)},$$
     where $\tilde{O}(1)$ hides $\text{ polylog}\left(\dfrac{T}{\delta},~\dfrac{1}{\sigma^4_\eta},~\tau,~n^i_{\widehat{x}},~||\Sigma_0||,~||K^{\text{play}}_{\I^i_{\widehat{Q}}}||,~||\mathfrak{P}_\infty|| \right).$
    \label{thm:undecsampcomp} 
\end{thm}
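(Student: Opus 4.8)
The plan is to reduce Theorem~\ref{thm:undecsampcomp} to the single-agent structured LSPI analysis of~\cite{krauth2019finite} applied to the \emph{subsystem} $(A_{\I^i_{\widehat{Q}}}, B_{\I^i_{\widehat{Q}}})$ with state dimension $n^i_{\widehat{x}}$, control dimension $n^i_{\widehat{u}}$, cost matrices $(S_{\I^i_{\widehat{Q}}}, R_{\I^i_{\widehat{Q}}})$, target controller $K_{\I^i_{\widehat{Q}}}$ and play controller $K^{\text{play}}_{\I^i_{\widehat{Q}}}$. The first step is to establish that this subsystem is genuinely self-contained: by Lemma~\ref{lm:rtc} (closedness of $\I^i_{\widehat{Q}}$ under state transition and control generation, via the inclusions $\I^j_O, \I^j_S \subseteq \I^i_{\widehat{Q}}$ for $j \in \I^i_{\widehat{Q}}$), the restricted state $x_{\I^i_{\widehat{Q}}}(t)$ evolves according to the autonomous LTI recursion displayed just before Proposition~\ref{prop:linarch}, driven only by $w_{\I^i_{\widehat{Q}}}(t) \sim \mc{N}(0, \sigma^2_w \mb{I}_{n^i_{\widehat{x}}})$ and by the localized exploration noise $\eta_{\I^i_{\widehat{Q}}}(t) \sim \mc{N}(0, \sigma^2_\eta \mb{I}_{n^i_{\widehat{u}}})$, which is the marginal of the global exploration noise (here using that distinct blocks of $\eta$ are independent). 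Hence the data $\mc{D}^{(l)}_{\I^i_{\widehat{Q}}}$ queried in Algorithm~\ref{alg:malspi} has exactly the statistics of a single-agent LQR trajectory of dimension $n^i_{\widehat{x}} + n^i_{\widehat{u}}$.

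Next I would identify the LSTDQ estimate~\eqref{eq:eiv_lse} with the error-in-variables least-squares estimator analyzed in~\cite{krauth2019finite}: Proposition~\ref{prop:linarch} shows that $\widehat{Q}^{\text{true}}_i$ solves the Bellman fixed point~\eqref{eq:bellman} with the explicit Lyapunov form, so the residual in~\eqref{eq:eiv_ls} is entirely the $\mathbf{\Psi}_+ - \mathbf{\Xi}$ term, i.e. the conditional-mean error of the next-state feature. I would then invoke the concentration machinery of~\cite{krauth2019finite}: (i) a lower bound on $\lambda_{\min}(\mathbf{\Phi}^\intercal \mathbf{\Phi}/T)$ via a block martingale small-ball / covariate concentration argument, where the covariance of $[x_{\I^i_{\widehat{Q}}}(t); u_{\I^i_{\widehat{Q}}}(t)]$ is controlled from below by $\sigma^2_\eta$ through the $B_{\I^i_{\widehat{Q}}}$ channel and the stationary covariance $\mathfrak{P}_\infty$; (ii) an upper bound on the cross term $\|\mathbf{\Phi}^\intercal(\mathbf{\Psi}_+ - \mathbf{\Xi})\|$ controlling the martingale-difference noise; and (iii) the uniform bound $\bar{\sigma}_i$ on $\|x_{\I^i_{\widehat{Q}}}(t)\|$-type quantities, which comes from the $(\tau,\rho)$-stability assumption applied to $\mathfrak{P}_\infty$ and the transient term $\tau^2\rho^4\|\Sigma^{\widehat{x}}_0\|$. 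Combining these through the perturbation bound for (regularized) least squares gives the stated $\tilde O(1/\sqrt{T})$ error bound once $T$ exceeds the burn-in threshold that makes the empirical Gram matrix invertible with high probability; matching the explicit dependence on $\tau, \rho, \|K_{\I^i_{\widehat{Q}}}\|_+, \|A_{\I^i_{\widehat{Q}}}\|, \|B_{\I^i_{\widehat{Q}}}\|, \sigma_\eta, \sigma_w$ is then bookkeeping of the constants in~\cite{krauth2019finite} with $n_x \mapsto n^i_{\widehat{x}}$, $n_u \mapsto n^i_{\widehat{u}}$.

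The main obstacle I anticipate is handling the \emph{play controller being distinct from the target controller} ($K^{\text{play}}_{\I^i_{\widehat{Q}}} \neq K_{\I^i_{\widehat{Q}}}$) cleanly within the restricted-subsystem reduction: the features $\phi_t, \xi_t, \psi_t$ mix the play dynamics (which generate the data) with the target gain $K_{\I^i_{\widehat{Q}}}$ (which appears inside $\psi_t$ and in the Bellman map), so the error-in-variables structure is genuinely asymmetric and the bound must carry both $\|K^{\text{play}}_{\I^i_{\widehat{Q}}}\|_+$ (from inverting the play-generated Gram matrix) and $\|K_{\I^i_{\widehat{Q}}}\|_+$ (from the Bellman residual). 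I would address this by verifying that the restriction to $\I^i_{\widehat{Q}}$ commutes with the off-policy construction — i.e. that $K^{\text{play}}_{\I^i_{\widehat{Q}}}$ and $K_{\I^i_{\widehat{Q}}}$ as defined via the projection operators $P^{n}_{\mc{S}_1,\mc{S}_2}$ are exactly the gains one would use in a stand-alone single-agent LSPI on the subsystem — and then citing the off-policy LSPI error bound of~\cite{krauth2019finite} verbatim. A secondary technical point is confirming that the positive-semidefinite projection $\text{psd\_proj}_\zeta$ does not enter this particular estimation-error statement (it affects only the downstream policy-improvement analysis), so the bound is on the raw LSTDQ output $\hat q^{\text{direct}}_i$ before projection.
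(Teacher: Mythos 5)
Your proposal is correct and follows essentially the same route as the paper: the paper's proof likewise uses the closedness of $\I^i_{\widehat{Q}}$ under state transition and control (Lemma~\ref{lm:rtc}) to reduce the restricted trajectory to a self-contained single-agent LQR of dimensions $(n^i_{\widehat{x}}, n^i_{\widehat{u}})$, reproduces the error-in-variables least-squares manipulation to bound $\|\hat{q}^{\text{true}}_i - \hat{q}_i\|$ in terms of $\sigma_{\min}(\bphi)$ and $\sigma_{\min}(\mathbb{I} - L\otimes_s L)$, and then invokes the concentration analysis of \cite{krauth2019finite} verbatim with the substituted dimensions. Your observations about the off-policy asymmetry and the irrelevance of the PSD projection to this particular bound are consistent with what the paper does.
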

To interpret the result, observe that achieving an $\epsilon-$close estimate of $\hat{q}^{\text{true}}_i$ requires at most $$T \leq {\tilde{O}(1) }\text{max}\left(\dfrac{ W_i^2 (n^i_{\widehat{x}} + n^i_{\widehat{u}})^3}{ \sigma_\eta^4 \epsilon^2}||\widehat{Q}^{\text{true}}_i||^2, \dfrac{ W_i^2 (n^i_{\widehat{x}})^2(n^i_{\widehat{x}} + n^i_{\widehat{u}})^2}{ \sigma_\eta^4}\right)~\text{samples,}$$ where $W_i = ||K^{\text{play}}_{\I^i_{\widehat{Q}}}||^2_+ \sigma_w \bar{\sigma}_i \dfrac{\tau^2||K_{\I^i_{\widehat{Q}}}||^4_+ {{(||A_{\I^i_{\widehat{Q}}}||^2 + ||B_{\I^i_{\widehat{Q}}}||^2)}}}{\rho^2(1-\rho^2)}.$ We focus on the scaling w.r.t. the state and control dimensions $(n^i_{\widehat{x}}, n^i_{\widehat{u}})$. For agent $i\in \mc{V}$, the \textit{direct} decomposition based Algorithm~\ref{alg:malspi} is more sample efficient than the centralized case (whose state dimension is $Nn_x$ and control dimension is $Nn_u$) if $\I^i_{\widehat{Q}} \subset \mc{V}$, i.e., $|\I^i_{\widehat{Q}}|<N$. 
Under similar pre-conditions to Theorem~\ref{thm:undecsampcomp}, we prove in Appendix~\ref{sec:decsampcomp} that in the \textit{indirect} case, achieving an $\epsilon-$close estimate of ${q}^{\text{true}}_i$ requires at most $$ T \leq {\tilde{O}(1)} \underset{j\in \I^i_{\text{GD}}}{\text{ max }} \left(\text{ max }\left(\dfrac{ W_j^2 (n^j_{x} + n^j_{u})^3}{ \sigma_\eta^4 (w^i_j)^2 \epsilon^2}||Q_j||^2, \dfrac{W_j^2 (n^j_{x})^2(n^j_{x} + n^j_{u})^2}{ \sigma_\eta^4}\right)\right)~\text{samples},$$ where $W_j = ||K^{\text{play}}_{\I^i_{{Q}}}||^2_+ \sigma_w \bar{\sigma}_j \dfrac{\tau^2||K_{\I^i_{{Q}}}||^4_+ {{(||A_{\I^i_{{Q}}}||^2 + ||B_{\I^i_{{Q}}}||^2)}}}{\rho^2(1-\rho^2)}$, and $w^i_1,\cdots,w^i_{|\I^i_{\text{GD}}|} \in \R_+$ satisfy {$\sum_{j=1}^{|\I^i_{\text{GD}}|} w^i_j = 1.$}  The user-defined weights {$w^i_j$} can be construed as the relative importance of the estimation accuracies of $q_j$, $\forall~j\in \I^i_{GD}.$ By a specific choice of {$w^i_j$} and the analysis in Appendix~\ref{app:remark}, we show that the worst-case sample complexity of the \textit{indirect} decomposition-based Algorithm~\ref{alg:malspi} is equal to that of the \textit{direct} case and strictly better if  $\I^j_{{Q}} \subset \I^i_{\widehat{Q}},~\forall~j\in\I^i_{\text{GD}}.$ The necessary and sufficient graphical conditions to ensure strictly better sample efficiency of direct and indirect methods are derived in Lemma~\ref{lm:gcondn} (Appendix~\ref{sec:gcondn}).

\section{Simulations}\label{sec:sim} 

Consider $N$ agents, whose dynamics are given in~\cite[Example 1]{krauth2019finite}.  
We investigate the performance of Algorithm~\ref{alg:malspi} in two examples. \textbf{Example 1}:  We prescribe the inter-agent couplings $\mc{E}_R = \lrs{(i,i)|i\in \mc{V}},~\mc{E}_S = \mc{E}_O =\lrs{(j,j-1),(j,j+1)| j = 2k+1 \leq N, k\in \mb{N}} \cup \lrs{(1,N)}\cup \lrs{(N,1), \text{ if } N = 2k+1 \text{ for some } k\in \mb{N}} \cup \lrs{(i,i)|i\in \mc{V}},$ which yield $\mc{E}_{Q} = \mc{E}_O$. \textbf{Example 2}: We prescribe the inter-agent couplings $\mc{E}_S = \lrs{(i,i)|i\in \mc{V}},~\mc{E}_O = \mc{E}_R=\lrs{(1,j)| j \in \mc{V}} \cup \lrs{(i,i)|i\in \mc{V}},$ corresponding to a leader-follower network which yields $\mc{E}_{Q} = \mc{E}_R.$ Example 1 demonstrates the case where $\I^i_Q \subset \I^i_{\widehat{Q}} \subset \mc{V}$ holds $\forall~i\in \mc{V}$ with $\underset{i}{\text{max}} (|\I^i_{\widehat{Q}}| - |\I^i_Q|) = 4$ for any $N$. In contrast, Example 2 highlights the case where
for the leader $l\in\mc{V}$, it always holds that $\I^l_Q \subset\I^l_{\widehat{Q}} = \mc{V}$.
We compare the direct and indirect decompositions with a `centralized' (CTDE) MALSPI baseline in which each agent learns $Q_i(x(t),u(t))$ in the policy evaluation step while the individual control is still subject to $\mc{G}_O$. 
In addition, we examine the `undecomposed direct' method, in which each agent learns $\widehat{Q}_i(x(t),u(t))$ in the policy evaluation step, to study the effect of decomposition of the Q-function on the performance of the learned controller.
The decomposition of the above baselines and the simulation parameters are summarized in Table~\ref{tab:comptab},~\ref{table:simparam1} (Appendix~\ref{sec:simparam}) respectively.  
We simulate $N = 8$ agents with an evaluation trajectory length of $T_{\text{eval}}= 500$ steps. Fig.~\ref{fig:8agentrew1}a and~\ref{fig:8agentrew1}b show the comparison of the total average cost for $20$ MC simulations using different Q-function architectures for Example 1 and 2, respectively. In both cases, we observe that the indirect method has the fastest rate of convergence and lowest average cost followed by the direct, undecomposed direct, and centralized methods. Given a sufficient number of samples, the centralized and undecomposed direct methods perform comparably to the direct and indirect methods. This empirically corroborates the strictly better sample efficiency of the Q-function decomposition as discussed in Section~\ref{sec:theory}. The difference in the performance of the direct and indirect methods in the low-sample regime is more pronounced in Example 2 than in Example 1. This is attributed to the comparable sparsity of $\I^i_Q$ and $\I^i_{\widehat{Q}}$ (since $\underset{i}{\text{max}} (|\I^i_{\widehat{Q}}| - |\I^i_Q|) = 4$) in Example 1,  resulting in comparable sample efficiency for both methods. However, in Example 2, $\I^1_{\widehat{Q}} =\mc{V}$ for agent 1 (the leader), rendering the direct method to be less sample efficient compared to the indirect method. Table~\ref{table:avgtime} (Appendix~\ref{sec:simparam}) summarizes the average computational time per iteration of Algorithm~\ref{alg:malspi} for $N = 8,~20,~40$;~$T = 500$ steps in Example 1, 2. We observe that the computational time for the centralized and undecomposed direct methods scales exponentially with the number of agents whereas the time for the direct and indirect decomposition scales only with the number of agents in the $\I^i_Q$ and $\I^i_{\widehat{Q}}$, respectively. This corroborates the computational savings and scalability achieved by the Q-function decomposition proposed in Section~\ref{sec:decompQ}.

\begin{figure}[htpb]
    \centering
    % First group of images
    \begin{subfigure}
        \centering
        \includegraphics[width=0.22\textwidth]{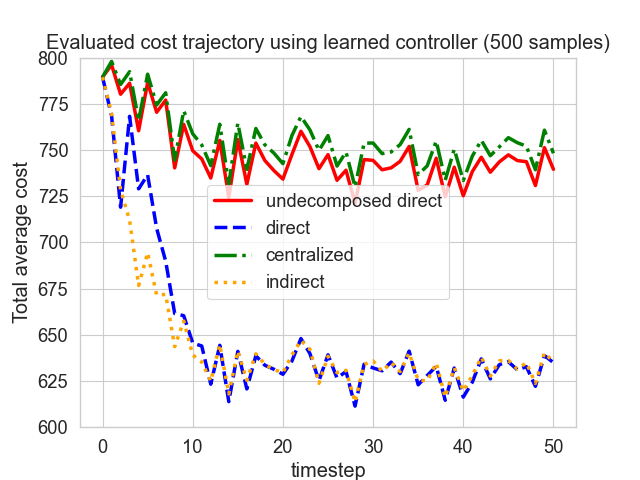}
        \includegraphics[width=0.22\textwidth]{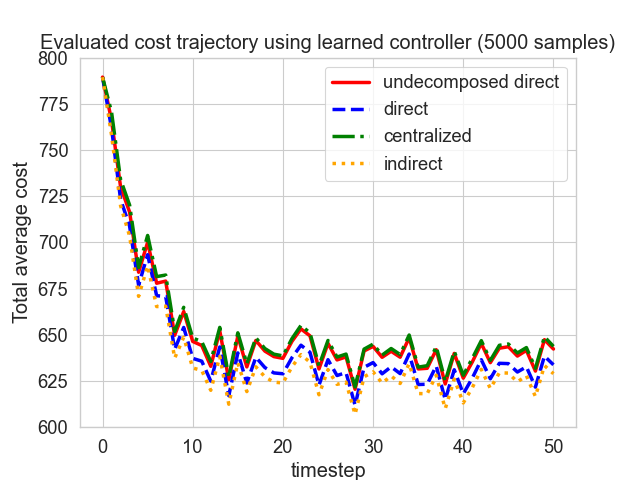}
        % \caption{Example 1}
        \label{fig:suba}
    \end{subfigure}
    % Second group of images
    \begin{subfigure}
        \centering
        \includegraphics[width=0.22\textwidth]{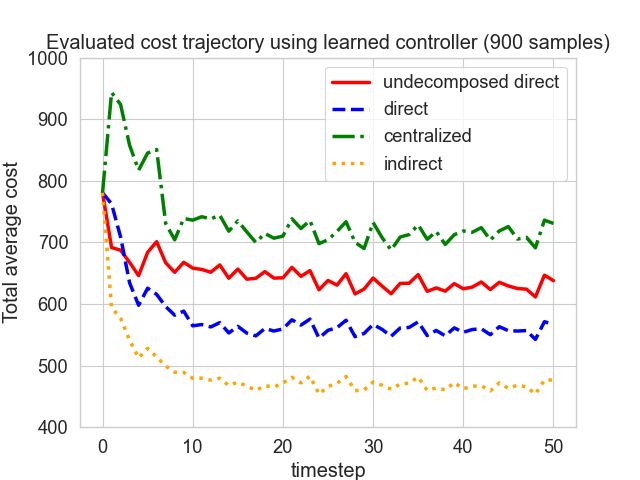}
        \includegraphics[width=0.22\textwidth]{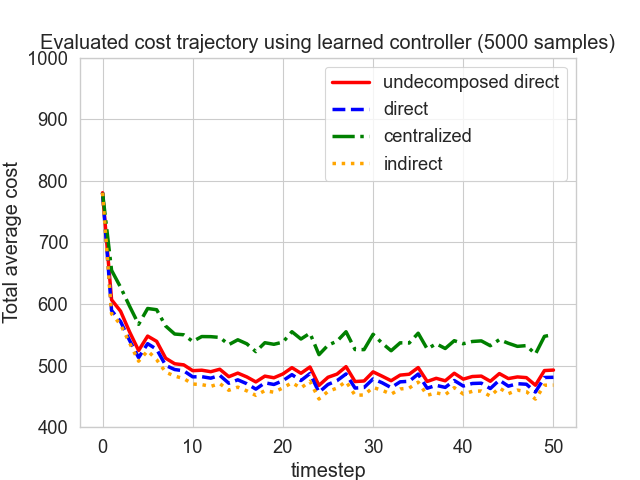}
        % \caption{Example 2}
        \label{fig:b}
    \end{subfigure}
    \caption{Comparison of the total average cost for Example 1 (a) and Example 2 (b) using direct, indirect, undecomposed direct, and centralized Q-function architectures in Algorithm~\ref{alg:malspi}.}
    \label{fig:8agentrew1}
\end{figure}

\section{Conclusion}\label{sec:conc}
We develop a systematic approach to leverage inter-agent coupling information and perform exact decompositions of the individual Q-function and the gradient of the global Q-function. Based on the decompositions, we introduce a cooperative deterministic policy gradient theorem and a cooperative MALSPI algorithm. We establish the theoretical sample and error guarantees for the obtained decomposition and provide necessary and sufficient graphical conditions for better sample efficiency of the proposed decomposition. {We empirically validate the improved sample and computational efficiency using two numerical examples.} Our future work will investigate the effect of approximate decompositions on the Q-function estimation and the convergence of the  MALSPI algorithm.

\newpage
\acks{This work was partly supported by the U.S. DEVCOM Army Research Laboratory (ARL) under Cooperative Agreement W911NF2120219 and National Science Foundation (NSF) award \#2212582. Any opinion, findings, and conclusions or recommendations expressed in this material are those of the authors and should not be interpreted as representing the official policies, either expressed or implied, of ARL, NSF or the U.S. Government.}
\bibliography{references}

\newpage
\begin{center}
    \Large{\textbf{Appendix}}
\end{center}

\appendix
\section{Proof of Lemma~\ref{lm:rtc}}\label{sec:rtc}
\begin{proof}
We prove a stronger version of the lemma that holds irrespective of the linear dynamics and quadratic cost assumption.
        For some $i,j\in \mathcal{V}$, let $j \in \mathcal{I}^i_Q$. For the sake of contradiction, assume that $\exists$ a $k \in \mathcal{R}^j_{SO}$ such that $k \notin \mathcal{I}^i_Q$. By the definition of  $\mathcal{I}^i_Q$, $j \in \mathcal{I}^i_Q$ implies that 
        for some $t' \geq t$, $\exists$ a function (or composition of functions) $f:\mathcal{S}\times \mathcal{U} \rightarrow \R$ such that 
    \begin{align}
        c_i(x_{\I^i_C}(t'),u_{\I^i_C}(t')) = f(x_j(t), u_j(t), \bigcup_{g\in \I^i_Q\setminus{j}} \{x_g(\cdot), u_g(\cdot)\}).
        \label{eq:reward}
    \end{align}
    Recall that the control $u_j(t) \in \mc{U}$ depends only on its partial observation $o_j(t)$, current state $x_j(t)$, and local policy {$\pi_{j}(\cdot)$}. Therefore, $\exists$ a function $g_j: \mZ_j \rightarrow P(\mc{U}_j)$ such that 
    \begin{align}
        u_j(t) \sim g_j(o_j(t)) = g_j(\{x_m(t)\}_{m\in \I^j_O})
        \label{eq:action}
    \end{align}

    Similarly, due to the Markovian assumption for each $x_j(t)$, $\exists$ a mapping $h_j:\prod_{n\in \I^j_S} \mS_n \times \prod_{n\in \I^j_S} \mc{U}_n \rightarrow P(\mS_j)$ such that
    \begin{align}
        x_j(t) \sim h_j(\{x_n(t-1)\}_{n\in \I^j_S}, \{u_n(t-1)\}_{n\in \I^j_S}).
        \label{eq:statetransition}
    \end{align}

    Using~\eqref{eq:action} and~\eqref{eq:statetransition},~\eqref{eq:reward} can be rewritten as
    \begin{align}
        &c_i(x_{\I^i_C}(t'),u_{\I^i_C}(t')) = f(x_j(t), u_j(t),\bigcup_{g\in \I^i_Q\setminus{j}} x_g, u_g)\\
        &= f(h_j(\{x_n(t-1)\}_{n\in \I^j_S}, \{u_n(t-1)\}_{n\in \I^j_S}), g_j(\{x_m(t)\}_{m\in \I^j_O}), \bigcup_{g\in \I^i_Q\setminus{j}} \{x_g(\cdot), u_g(\cdot)\})\\
        &= f(h_j(\{x_n(t-1), u_n(t-1)\}_{n\in \I^j_S}), g_j(\{\{x_l(t-1), u_l(t-1)\}_{l\in \I^m_S}\}_{m\in \I^j_O}), \bigcup_{g\in \I^i_Q\setminus{j}} \{x_g(\cdot), u_g(\cdot)\}).
        \label{eq:recursivereward}\
    \end{align}

     On recursive expansion of~\eqref{eq:recursivereward}, it is straightforward to verify that $c_i(x_{\I^i_C}(t'),u_{\I^i_C}(t'))$ depends on $\{x_s(t''),u_s(t'')\}_{s\in \mathcal{R}^j_{SO}}$, for some $t''\leq t \leq t'$. Thus, $i \in \I^s_{\text{GD}}$ $\forall~s\in \mathcal{R}^j_{SO}$ which implies that $s \in \I^i_Q$  $\forall~s\in \mathcal{R}^j_{SO}$. But as $k\in \mathcal{R}^j_{SO}$, $k \in \I^i_Q$ which is a contradiction. Therefore, our assumption is false and hence if $j \in \I^i_Q$, then $\forall~k\in \mathcal{R}^j_{SO}$, $k\in \I^i_Q$ as required.
\end{proof}

\section{Proof of Theorem~\ref{thm:Qsetdecomp}}\label{sec:Qsetdecomp}
\begin{proof}
For the networked system,  observe that the individual cost-to-go for each agent $Q_i$ is dependent on the global state and control due to the long-term inter-agent dependencies between the agents. Recall that
\begin{align}
   &Q_i(x,u) =   c_i(x_{\I^i_C}, u_{\I^i_C}) +  \mb{E}\lrb{\sum_{t =1}^{\infty}  c_i(x_{\I^i_C}(t), u_{\I^i_C}(t))}.
   \label{eq:Q}
\end{align}
For the LTI dynamics~\eqref{eq:inddyn} and quadratic cost~\eqref{eq:indcost},~\eqref{eq:Q} can be rewritten as
\vspace{-10pt}
\begin{align}
   &Q_i(x,u)= \begin{bmatrix}
        x_{\I^i_C}(t)\\
        u_{\I^i_C}(t)
    \end{bmatrix}^\T \begin{bmatrix}
       S_i & 0\\
       0 & R_i
    \end{bmatrix} \begin{bmatrix}
        x_{\I^i_C}(t)\\
        u_{\I^i_C}(t)
    \end{bmatrix} + \mb{E}_{w(t),\eta(t)} \Bigg[\begin{bmatrix}
       x_{\I^i_C}(t+1)\\
        u_{\I^i_C}(t+1)
    \end{bmatrix}^\T \begin{bmatrix}
       S_i & 0\\
       0 & R_i
    \end{bmatrix} \begin{bmatrix}
        x_{\I^i_C}(t+1)\\
        u_{\I^i_C}(t+1)
    \end{bmatrix}\nonumber\\
    &~~+\mb{E}_{w(t+1),\eta(t+1)} \begin{bmatrix}
    x_{\I^i_C}(t+2)\\
        u_{\I^i_C}(t+2)
    \end{bmatrix}^\T \begin{bmatrix}
       S_i & 0\\
       0 & R_i
    \end{bmatrix} \begin{bmatrix}
        x_{\I^i_C}(t+2)\\
        u_{\I^i_C}(t+2)
    \end{bmatrix} + \mb{E}\lrb{\cdots}\Bigg]=\nonumber\\
    & \sum_{j,k\in\I^i_C} \Bigg[ (x_j(t))^\T S_{jk} (x_{k}(t)) +(u_{j}(t))^\T R_{jk} (u_{k}(t)) 
    +\lrb{\sigma^2_w \trace{S_i} + \sigma^2_\eta \trace{R_i}}_{j\in \I^i_C}  + \nonumber\\
    &
    \lrb{x^\T_{\I^j_S}(t)A^\T_j S_i A_j x_{\I^j_S}(t) +  u^\T_{\I^j_S}(t)B^\T_j S_i B_j u_{\I^j_S}(t) + 2  x^\T_{\I^j_S}(t)A^\T_j S_i  B_j u_{\I^j_S}(t) + x^\T_{\I^j_O}(t)K^\T_j R_i K_j x_{\I^j_O}(t)}_{j\in \I^i_C}\nonumber\\& +  \sigma^2_\eta \trace{B^\T_j S_i B_j \mb{I}_{n_u |\I^j_S|}} +2 \trace{A^\T_j S_i  B_j w_k(t) \eta_l^\T(t)}_{\substack{k\in \I^j_S\\ l \in \I^j_O}}  +\sigma^2_w \trace{A^\T_j S_i A_j \mb{I}_{n_x |\I^j_S|}} +\cdots\Bigg].
\label{eq:indQdec}
\end{align}
From~\eqref{eq:indQdec}, we observe that for time-invariant inter-agent couplings, the $Q_i(\cdot)$ for each $i\in \mc{V}$ depends on its neighbors in the cost graph which in turn depend on their neighbors in the state and observation graphs and so on. In other words, $\forall~i\in \mc{V},$ $Q_i(\cdot)$ depends on a subset of agents $\I^i_Q := \lrs{\I^i_C \cup \lrs{\mc{R}^k_{SO}}_{k\in \I^i_C}} = \lrs{\mc{R}^k_{SO}}_{k\in \I^i_C}.$ By Lemma~\ref{lm:rtc}, we have that $\I^i_Q$ is closed under $\mc{R}_{SO}$ which implies that the information of agents in $\I^i_Q$ is sufficient to exactly compute the the future costs of agent $i$. Thus, it follows that $Q_i(x(t),u(t)) = Q_i(x_{\I^i_Q}(t), u_{\I^i_Q}(t))$.
\end{proof}

\section{Proof of Theorem~\ref{thm:graddecomp}}\label{sec:graddecomp}
\begin{proof}
Recall that
\begin{align}
&Q(x,u) = \mathbb{E}_\pi\left[ \sum_{i=1}^N \sum_{t=0}^\infty c_i(x_{\mathcal{I}^i_C}(t), u_{\mathcal{I}^i_C}(t)) | x(0) \!=\! x, u(0)\! = \!u\right]\nonumber\\
   &=\mathbb{E}_\pi\bigg[ \sum_{j \in \mathcal{I}_{\text{GD}}^i} \sum_{t=0}^\infty c_j(x_{\mathcal{I}^j_C}(t), u_{\mathcal{I}^j_C}(t)) | x(0) \!=\! x, u(0)\! = \!u\bigg]\hspace{160pt}\nonumber\\
   &\hspace{60pt}+\mathbb{E}_\pi\bigg[ \sum_{j \backslash \mathcal{I}_{\text{GD}}^i} \sum_{t=0}^\infty c_j(x_{\mathcal{I}^j_C}(t), u_{\mathcal{I}^j_C}(t)) | x(0) \!=\! x, u(0)\! = \!u\bigg]\nonumber\\
   &= \sum_{j \in \mathcal{I}_{\text{GD}}^i} Q_j(x_{\I^j_Q},u_{\I^j_Q}) + \sum_{k \backslash \mathcal{I}_{\text{GD}}^i} Q_k(x_{\I^k_Q},u_{\I^k_Q}) = \widehat{Q}_i(x_{\I^j_{\widehat{Q}}},u_{\I^j_{\widehat{Q}}}) + \bar{Q}_i(x_{\I^i_{\bar{Q}}},u_{\I^i_{\bar{Q}}}),
   \label{eq:Qbarhat}
   \end{align}
where $\bar{Q}_i(x_{\I^i_{\bar{Q}}},u_{\I^i_{\bar{Q}}}) = Q(x,u) - \widehat{Q}_i(x_{\I^j_{\widehat{Q}}},u_{\I^j_{\widehat{Q}}}) = \sum_{k \backslash \mc{I}^i_{\text{GD}}} Q_k(x_{\I^k_Q},u_{\I^k_Q})$.
From Theorem~\ref{thm:Qsetdecomp},
the reward of each agent $i \in \mathcal{V}$ depends on $x_j(t)$, $u_j(t)$ $\forall$ $j \in \I^i_Q$ and $\mathcal{E}_{\text{GD}} = \mathcal{E}^\intercal_{Q}$ by the definition of $\mathcal{G}_{\text{GD}}$. Therefore, if $j \notin \mathcal{I}^i_{\text{GD}}$, then $i \notin \I^j_Q$. Hence, $\sum_{j \backslash \mathcal{I}^i_{\text{GD}}} c_j(x_{\mathcal{I}^j_C}(t), u_{\mathcal{I}^j_C}(t))$ is independent of $u_i(t)$ and thus $K_i$. It then follows that $ Q_j(\cdot)$ is independent of $K_i$, $\forall$ $j \notin \mathcal{I}^i_{\text{GD}}$, which implies
\begin{align}
    \nabla_{K_i} \  \bar{Q}_i &=  
     \nabla_{K_i}\mathbb{E}_\pi\bigg[  \sum_{j \backslash \mathcal{I}_{\text{GD}}^i} \sum_{t=0}^\infty c_j(x_{\mathcal{I}^j_C}(t), u_{\mathcal{I}^j_C}(t)) | x(0) \!=\! x, u(0)\! = \!u\bigg] \nonumber\\
     &\overset{(a)}{= }  \mathbb{E}_\pi\bigg[ \nabla_{K_i} \sum_{j \backslash \mathcal{I}_{\text{GD}}^i} \sum_{t=0}^\infty c_j(x_{\mathcal{I}^j_C}(t), u_{\mathcal{I}^j_C}(t)) | x(0) \!=\! x, u(0)\! = \!u\bigg]= 0,
  \label{eq:gradindependence}
\end{align}
 where (a) in~\eqref{eq:gradindependence} is obtained by interchanging the derivative and integral {because each $Q_j(\cdot)$ is a quadratic function (as shown in Proposition~\ref{prop:linarch}), and thus sufficiently smooth in state and control.}
Hence, the gradient of the global action value function with respect to $K_i$ is given by $ \nabla_{K_i}  Q(s,a) =  
   \nabla_{K_i}  [\widehat{Q}_i + \bar{Q}_i ]= \nabla_{K_i} \widehat{Q}_i.$
\end{proof}

\section{Proof of Proposition~\ref{prop:linarch}}\label{sec:linarch}
\begin{proof}
    From~\eqref{eq:bellman}, we have  
    \begin{align}
   \widehat{Q}_i(x_{\I^i_{\widehat{Q}}}, u_{\I^i_{\widehat{Q}}}) &=  \begin{bmatrix}
        x_{\I^i_{\widehat{Q}}}(t)\\
        u_{\I^i_{\widehat{Q}}}(t)
    \end{bmatrix} ^\T\begin{bmatrix}S_{\I^i_{\widehat{Q}}}&0\\0&R_{\I^i_{\widehat{Q}}}\end{bmatrix} \begin{bmatrix}
        x_{\I^i_{\widehat{Q}}}(t)\\
        u_{\I^i_{\widehat{Q}}}(t)
\end{bmatrix} + \mb{E}\lrb{ \widehat{Q}_i(x_{\I^i_{\widehat{Q}}}(t+1), u_{\I^i_{\widehat{Q}}}(t+1))}.
\end{align}
Then, the expected future Q-value can be rewritten as
\begin{align}
 & \mb{E}\lrb{ \widehat{Q}_i(x_{\I^i_{\widehat{Q}}}(t\!+\!1), u_{\I^i_{\widehat{Q}}}(t\!+\!1))}\nonumber\\
&=\mb{E}\lrb{\begin{bmatrix}
        x_{\I^i_{\widehat{Q}}}(t\!+\!1)\\
        u_{\I^i_{\widehat{Q}}}(t\!+\!1)
    \end{bmatrix} ^\T\begin{bmatrix}S_{\I^i_{\widehat{Q}}}&0\\0&R_{\I^i_{\widehat{Q}}}\end{bmatrix} \begin{bmatrix}
        x_{\I^i_{\widehat{Q}}}(t\!+\!1)\\
        u_{\I^i_{\widehat{Q}}}(t\!+\!1)
\end{bmatrix}} + \mb{E}\lrb{\mb{E}\lrb{ \widehat{Q}_i(x_{\I^i_{\widehat{Q}}}(t\!+\!2), u_{\I^i_{\widehat{Q}}}(t\!+\!2))}}\nonumber\\
&= \mb{E}\lrb{(A_{\I^i_{\widehat{Q}}} x_{\I^i_{\widehat{Q}}}(t) + B_{\I^i_{\widehat{Q}}} u_{\I^i_{\widehat{Q}}}(t) + w_{\I^i_{\widehat{Q}}}(t))^\T S_{\I^i_{\widehat{Q}}}(A_{\I^i_{\widehat{Q}}} x_{\I^i_{\widehat{Q}}}(t) + B_{\I^i_{\widehat{Q}}} u_{\I^i_{\widehat{Q}}}(t) + w_{\I^i_{\widehat{Q}}}(t)) } + \nonumber\\
& \mb{E}\Big[(K_{\I^i_{\widehat{Q}}}(A_{\I^i_{\widehat{Q}}} x_{\I^i_{\widehat{Q}}}(t) + B_{\I^i_{\widehat{Q}}} u_{\I^i_{\widehat{Q}}}(t) + w_{\I^i_{\widehat{Q}}}(t)) 
)^\T  R_{\I^i_{\widehat{Q}}} (K_{\I^i_{\widehat{Q}}}(A_{\I^i_{\widehat{Q}}} x_{\I^i_{\widehat{Q}}}(t) + B_{\I^i_{\widehat{Q}}} u_{\I^i_{\widehat{Q}}}(t) + w_{\I^i_{\widehat{Q}}}(t)) 
) \Big]\nonumber\\
&+ \mb{E}\lrb{\mb{E}\lrb{\widehat{Q}_i(x_{\I^i_{\widehat{Q}}}(t\!+\!2), u_{\I^i_{\widehat{Q}}}(t\!+\!2))}}\nonumber\\
&= (A_{\I^i_{\widehat{Q}}} x_{\I^i_{\widehat{Q}}}(t) + B_{\I^i_{\widehat{Q}}} u_{\I^i_{\widehat{Q}}}(t))^\T S_{\I^i_{\widehat{Q}}} (A_{\I^i_{\widehat{Q}}} x_{\I^i_{\widehat{Q}}}(t) + B_{\I^i_{\widehat{Q}}} u_{\I^i_{\widehat{Q}}}(t)) + \sigma^2_w \trace{S_{\I^i_{\widehat{Q}}}+K^\T_{\I^i_{\widehat{Q}}} R_{\I^i_{\widehat{Q}}}K_{\I^i_{\widehat{Q}}}} \nonumber\\
& +(K_{\I^i_{\widehat{Q}}}(A_{\I^i_{\widehat{Q}}} x_{\I^i_{\widehat{Q}}}(t) + B_{\I^i_{\widehat{Q}}} u_{\I^i_{\widehat{Q}}}(t)))^\T  R_{\I^i_{\widehat{Q}}} (K_{\I^i_{\widehat{Q}}}(A_{\I^i_{\widehat{Q}}} x_{\I^i_{\widehat{Q}}}(t) + B_{\I^i_{\widehat{Q}}} u_{\I^i_{\widehat{Q}}}(t))) 
\nonumber\\
&+ \mb{E}\lrb{\mb{E}\lrb{\widehat{Q}_i(x_{\I^i_{\widehat{Q}}}(t\!+\!2), u_{\I^i_{\widehat{Q}}}(t\!+\!2))}}
\end{align}
\begin{align}
&= \begin{bmatrix}
        x_{\I^i_{\widehat{Q}}}(t)\\
        u_{\I^i_{\widehat{Q}}}(t)
    \end{bmatrix} ^\T\begin{bmatrix}A^\T_{\I^i_{\widehat{Q}}}\\B^\T_{\I^i_{\widehat{Q}}}\end{bmatrix} (S_{\I^i_{\widehat{Q}}} + K^\T_{\I^i_{\widehat{Q}}} R_{\I^i_{\widehat{Q}}} K_{\I^i_{\widehat{Q}}}) \lrb{A_{\I^i_{\widehat{Q}}}~~B_{\I^i_{\widehat{Q}}}}\begin{bmatrix}
        x_{\I^i_{\widehat{Q}}}(t)\\
        u_{\I^i_{\widehat{Q}}}(t)
\end{bmatrix} + \sigma^2_w \begin{bmatrix}
    \mb{I}\\ K_{\I^i_{\widehat{Q}}}
\end{bmatrix}^\T\begin{bmatrix}S_{\I^i_{\widehat{Q}}}&0\\0&R_{\I^i_{\widehat{Q}}}\end{bmatrix} \begin{bmatrix}
    \mb{I}\\ K_{\I^i_{\widehat{Q}}}
\end{bmatrix}
\nonumber\\
& + \mb{E}\lrb{\mb{E}\lrb{\widehat{Q}_i(x_{\I^i_{\widehat{Q}}}(t\!+\!2), u_{\I^i_{\widehat{Q}}}(t\!+\!2))}}.
\label{eq:Qfunexp}
\end{align}
Recursive expansion of~\eqref{eq:Qfunexp} yields
\begin{align} 
&\widehat{Q}_i(x_{\I^i_{\widehat{Q}}}, u_{\I^i_{\widehat{Q}}}) =  \begin{bmatrix}
        x_{\I^i_{\widehat{Q}}}(t)\\
        u_{\I^i_{\widehat{Q}}}(t)
    \end{bmatrix}^\T \widehat{Q}_i\begin{bmatrix}
        x_{\I^i_{\widehat{Q}}}(t)\\
        u_{\I^i_{\widehat{Q}}}(t)
\end{bmatrix} +  \sigma^2_w \begin{bmatrix}
    \mb{I}\\ K_{\I^i_{\widehat{Q}}}
\end{bmatrix}^\T \widehat{Q}_i\begin{bmatrix}
    \mb{I}\\ K_{\I^i_{\widehat{Q}}}
\end{bmatrix},
\end{align}
where with a slight abuse of notation $$\widehat{Q}_i = \begin{bmatrix}S_{\I^i_{\widehat{Q}}}&0\\0&R_{\I^i_{\widehat{Q}}}\end{bmatrix} + \begin{bmatrix}A^\T_{\I^i_{\widehat{Q}}}\\B^\T_{\I^i_{\widehat{Q}}}\end{bmatrix}  \mathcal{L}\lrp{A_{\I^i_{\widehat{Q}}}\!+\! B_{\I^i_{\widehat{Q}}} K_{\I^i_{\widehat{Q}}}, S_{\I^i_{\widehat{Q}}}\! +\! K^\T_{\I^i_{\widehat{Q}}} R_{\I^i_{\widehat{Q}}} K_{\I^i_{\widehat{Q}}}}\lrb{A_{\I^i_{\widehat{Q}}}~~B_{\I^i_{\widehat{Q}}}},$$
and $\mc{L}(X,Y)$ is the analytical solution of the discrete time Lyapunov equation $\mc{P} = X \mc{P} X^\T + Y.$
\end{proof}

\section{Necessary and sufficient graphical conditions for strictly better sample complexity}\label{sec:gcondn}
\begin{lm}
    For every $i \in \mc{V},$ $j\in \I^i_{\text{GD}}$
    \begin{enumerate}[(a)]
        \item $\I^i_{\widehat{Q}} \subset \mc{V}$ if and only if $\exists$ $k\in \mc{V}$ such that  $\forall$ $m\in  \mc{R}^i_{(SO)^\T},$ $\forall$ $p\in \mc{R}^k_{(SO)^\T}$, $\I^m_{C^\T} \cap \I^p_{C^\T} = \emptyset$.
        \item $\I^j_{{Q}} \subset \I^i_{\widehat{Q}},$ if and only if $\exists$ $m\in  \mc{R}^i_{(SO)^\T},$ $p\in \mc{R}^k_{(SO)^\T}$, for some $k\in \mc{V}\setminus \I^j_Q$ such that $\I^m_{C^\T} \cap \I^p_{C^\T} \subset \I^i_{\text{GD}}$.
    \end{enumerate}
    \label{lm:gcondn}
\end{lm}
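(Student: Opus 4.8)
The plan is to reduce both equivalences to a single clean characterization of membership in the direct value‑dependence set $\I^i_{\widehat Q}$. First I would set up a ``dictionary'' rewriting the nested index sets in terms of reachability. Throughout, $\mc{R}^a_{(SO)^\T}$ denotes the set of agents reachable from $a$ in $\mc{G}_{SO}$, and $\I^m_{C^\T}=\{c\in\mc{V} : m\in\I^c_C\}$ the set of agents whose cost $m$ influences. Straight from $\mc{G}_{\text{GD}}=\{\mc{V},\mc{E}^\T_Q\}$, for all $a,b\in\mc{V}$ we have $a\in\I^b_{\text{GD}}\iff (b,a)\in\mc{E}_Q\iff b\in\I^a_Q$. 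Unwinding $\I^a_Q=\bigcup_{k\in\I^a_C}\mc{R}^k_{SO}$, together with the elementary facts $c\in\mc{R}^k_{SO}\iff k\in\mc{R}^c_{(SO)^\T}$ and $k\in\I^c_C\iff c\in\I^k_{C^\T}$, this gives the union form
\[
 \I^a_{\text{GD}} \;=\; \bigcup_{m\in\mc{R}^a_{(SO)^\T}}\I^m_{C^\T},\qquad\forall\,a\in\mc{V}.
\]

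The crux is then the identity: for all $a,b\in\mc{V}$, $b\in\I^a_{\widehat Q}\iff\I^a_{\text{GD}}\cap\I^b_{\text{GD}}\neq\emptyset$. This follows because $\I^a_{\widehat Q}=\bigcup_{j\in\I^a_{\text{GD}}}\I^j_Q$ and, by the dictionary, $b\in\I^j_Q\iff j\in\I^b_{\text{GD}}$, so $b\in\I^a_{\widehat Q}$ exactly when some index $j$ lies in both $\I^a_{\text{GD}}$ and $\I^b_{\text{GD}}$. Distributing the intersection over the union form, $\I^a_{\text{GD}}\cap\I^b_{\text{GD}}\neq\emptyset$ is in turn equivalent to $\I^m_{C^\T}\cap\I^p_{C^\T}\neq\emptyset$ for some $m\in\mc{R}^a_{(SO)^\T}$, $p\in\mc{R}^b_{(SO)^\T}$.

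For part (a), since $\I^i_{\widehat Q}\subseteq\mc{V}$ always, $\I^i_{\widehat Q}\subsetneq\mc{V}$ holds iff some $k\in\mc{V}$ satisfies $k\notin\I^i_{\widehat Q}$, i.e.\ (by the identity) $\I^i_{\text{GD}}\cap\I^k_{\text{GD}}=\emptyset$; expanding both index sets via the union form turns this into ``$\I^m_{C^\T}\cap\I^p_{C^\T}=\emptyset$ for all $m\in\mc{R}^i_{(SO)^\T}$ and all $p\in\mc{R}^k_{(SO)^\T}$,'' the claimed condition. For part (b), fix $j\in\I^i_{\text{GD}}$; then $\I^j_Q\subseteq\I^i_{\widehat Q}$ directly from the definition of $\I^i_{\widehat Q}$, so $\I^j_Q\subsetneq\I^i_{\widehat Q}$ iff there is $k\in\I^i_{\widehat Q}\setminus\I^j_Q$. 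Applying the identity to $k\in\I^i_{\widehat Q}$, this becomes: there exist $k\in\mc{V}\setminus\I^j_Q$, $m\in\mc{R}^i_{(SO)^\T}$, $p\in\mc{R}^k_{(SO)^\T}$ with $\I^m_{C^\T}\cap\I^p_{C^\T}\neq\emptyset$. Finally I would observe that $m\in\mc{R}^i_{(SO)^\T}$ forces $\I^m_{C^\T}\subseteq\I^i_{\text{GD}}$, while $k\notin\I^j_Q$ (equivalently $j\notin\I^k_{\text{GD}}$) forces $j\notin\I^p_{C^\T}$ although $j\in\I^i_{\text{GD}}$; hence the non-empty set $\I^m_{C^\T}\cap\I^p_{C^\T}$ is automatically a strict subset of $\I^i_{\text{GD}}$, matching the form written in the lemma, and conversely any such $m,p,k$ produce an element of $\I^i_{\widehat Q}\setminus\I^j_Q$.

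The main obstacle is not any single computation but the bookkeeping behind the dictionary and the identity above: correctly transposing each of the three coupling graphs and chasing a generic agent index through the three layers of definitions that build $\I^i_{\widehat Q}$ out of $\mc{R}_{SO}$ and $\I_C$. Once the symmetric-looking identity $b\in\I^a_{\widehat Q}\iff\I^a_{\text{GD}}\cap\I^b_{\text{GD}}\neq\emptyset$ is in place, both ``if and only if'' statements fall out by elementary set algebra, the only fine point being the strict-versus-nonstrict distinction in (b), which as shown holds automatically once $k\notin\I^j_Q$ and $j\in\I^i_{\text{GD}}$.
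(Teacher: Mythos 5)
Your proof is correct, and it takes a genuinely different and more economical route than the paper's. The paper proves each of the four implications separately by contradiction, with a case split in each on whether $i\in\I^j_C$ or $i$ is only reached through $\lrs{\mc{R}^q_{SO}}_{q\in\I^j_C}$; your argument replaces all of this with two structural identities --- the union form $\I^a_{\text{GD}}=\bigcup_{m\in\mc{R}^a_{(SO)^\T}}\I^m_{C^\T}$ and the symmetric membership criterion $b\in\I^a_{\widehat{Q}}\iff\I^a_{\text{GD}}\cap\I^b_{\text{GD}}\neq\emptyset$ --- after which both equivalences reduce to elementary set algebra. What this buys is a shorter, less error-prone argument that makes the symmetry of the condition in (a) transparent and isolates exactly where the strictness in (b) comes from ($j\in\I^i_{\text{GD}}$ but $j\notin\I^p_{C^\T}$ because $k\notin\I^j_Q$); the paper's element-chasing version stays closer to the raw definitions but obscures this structure. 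Two points you should make explicit when writing it up: (i) the duality $a\in\mc{R}^k_{SO}\iff k\in\mc{R}^a_{(SO)^\T}$ requires both reachability sets to be taken reflexively (the paper adds $\lrs{i}$ to $\mc{R}^i_{SO}$ but not, textually, to $\mc{R}^i_{(SO)^\T}$, even though its own proof invokes $i\in\mc{R}^i_{(SO)^\T}$); and (ii) in part (b) the condition ``$\I^m_{C^\T}\cap\I^p_{C^\T}\subset\I^i_{\text{GD}}$'' must be read as requiring a nonempty intersection, as you note --- otherwise the reverse implication is vacuous --- which is also how the paper's own sufficiency argument reads it, since it picks an element $h$ of that intersection.
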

\begin{proof} Define $\mc{R}^i_{(SO)^\T} = \lrs{j\in \mc{V}| j\xrightarrow{\mc{E}^\T_{SO}}i}$, and $\I^i_{C^\T} = \lrs{j\in \mc{V}|(j,i)\in \mc{E}^\T_O}$.
\begin{enumerate}[(a)]
\item \textbf{Necessary condition.} Assume that $\I^i_{\widehat{Q}}\subset \mc{V}.$ Then there exists a $k\in\mc{V}$ such that $k\not\in \bigcup_{j\in\I^i_{\text{GD}}} \I^j_Q$ i.e., $k\not\in \I^j_Q,$ $\forall~j\in \I^i_{\text{GD}}$.
This implies that $\forall~j\in \I^i_{\text{GD}}$, we have $k\not\in \I^j_C$ and $k\not\in \lrs{\mc{R}^p_{SO}}_{q\in \I^j_C}$. Similarly, as $j\in \I^i_{\text{GD}}$ implies $i \in \I^j_Q$, we have that either $i \in \I^j_C$ or $i \in \lrs{\mc{R}^q_{SO}}_{q\in \I^j_C}.$ 

Consider the case where $i \in \lrs{\mc{R}^q_{SO}}_{q\in \I^j_C}.$ Suppose that there exists an $r\in \I^j_C$ for which $i \in \mc{R}^r_{SO}$. Then as $k\not\in \I^j_C$ and $k\not\in \lrs{\mc{R}^q_{SO}}_{q\in \I^j_C}$, we have $\forall~m \in \mc{R}^i_{(SO)^\T}$ and $\forall~p \in \mc{R}^k_{(SO)^\T}$, $\I^m_{C^\T} \cap \I^p_{C^\T} = \emptyset$. This is because otherwise for every $l\in \I^m_{C^\T} \cap \I^p_{C^\T}$, we obtain $l\in \I^i_{\text{GD}}$ and $k \in \I^l_Q $, implying $k \in \I^i_{\widehat{Q}}$,  which contradicts our assumption.

Alternatively, if $i \in \I^j_C$, $k\not\in \I^j_C$ and $k\not\in \lrs{\mc{R}^q_{SO}}_{q\in \I^j_C}$ imply that $\forall~p \in \mc{R}^k_{(SO)^\T}$, $\I^i_{C^\T} \cap \I^p_{C^\T} = \emptyset$. Otherwise for every $l\in \I^i_{C^\T} \cap \I^p_{C^\T}$, we obtain $l\in \I^i_{\text{GD}}$, and $k \in \I^l_Q $ implying $k \in \I^i_{\widehat{Q}}$  which contradicts our assumption.

As $i\in \mc{R}^i_{(SO)^\T}$, we have that  $\I^i_{C^\T} \cap \I^p_{C^\T} = \emptyset$ whenever $\I^m_{C^\T} \cap \I^p_{C^\T} = \emptyset$. Therefore, we conclude that if $\I^i_{\widehat{Q}}\subset \mc{V}$, then $\forall~m \in \mc{R}^i_{(SO)^\T}$, $\forall~p \in \mc{R}^k_{(SO)^\T}$, $\I^m_{C^\T} \cap \I^p_{C^\T} = \emptyset$.

\textbf{Sufficient condition.}

Consider an $i\in \mc{V}$ and assume that there exists a $k\in \mc{V}$ such that $\forall~m \in \mc{R}^i_{(SO)^\T}$, $\forall~p \in \mc{R}^k_{(SO)^\T}$, $\I^m_{C^\T} \cap \I^p_{C^\T} = \emptyset$. Consider $s \in \I^i_{\text{GD}}$, which means $i\in \I^s_Q.$ It then follows that either $i\in \I^s_C$ or $i \in \lrs{\mc{R}^n_{SO}}_{n\in \I^s_C}$.

If $i \in \I^s_C$, then as $i\in \mc{R}^i_{(SO)^\T}$, we have that $\forall~p  \in \mc{R}^k_{(SO)^\T}$, $\I^i_{C^\T} \cap \I^p_{C^\T} = \emptyset$, which results in $p \not\in \I^s_C$. This is because otherwise $s\in \I^i_{C^\T} \cap \I^p_{C^\T}$. Also, as $k\in\mc{R}^k_{(SO)^\T}$, we have $k\not\in \I^s_C$.

For any $n\in \I^s_C$ such that $i \in \mc{R}^n_{SO}$, it follows that 
$n \in \mc{R}^i_{(SO)^\T}$. Therefore, $\forall~p \in \mc{R}^k_{(SO)^\T}$, $\I^n_{C^\T} \cap \I^p_{C^\T} = \emptyset$, which means $k\not\in \mc{R}^n_{SO}$ for any $n\in \I^s_C$ such that $i \in \mc{R}^n_{SO}$. Let $n_1, n_2 \in \I^s_C$, where $n_1 \neq n_2$ such that $i \in \mc{R}^{n_1}_{SO}$ but $i \not\in \mc{R}^{n_2}_{SO}$. Then, as $n_1 \in \mc{R}^i_{(SO)^\T}$, and $n_1, n_2 \in \I^s_C$, we have $k \not\in \mc{R}^{n_2}_{SO}$. This is because otherwise $s \in \I^{m}_{C^\T} \cap \I^{p}_{C^\T}$ for $m = n_1$ and $p = n_2$, which contradicts our assumption. Therefore, we conclude that $\forall~n\in \I^s_C,$ $k\not\in \mc{R}^n_{SO}$. 

It follows from $k\not\in \I^s_C$ and $k\not\in \lrs{\mc{R}^n_{SO}}_{n\in \I^s_C}$ that $k \not\in \I^s_Q$ $\forall~s\in \I^i_{\text{GD}}$, i.e., $k\not\in \I^i_{\widehat{Q}}$. As $k \in \mc{V}\setminus \I^i_{\widehat{Q}}$,  $\mc{V}\setminus \I^i_{\widehat{Q}}$ is non-empty, i.e., $\I^i_{\widehat{Q}}\subset \mc{V}$.

\item \textbf{Necessary condition.} Consider an $i\in \mc{V}$ and assume that there exists a $j\in \I^i_{\text{GD}}$, such that $\I^j_Q\subset \I^i_{\widehat{Q}}.$ This implies that $\exists~k\in\I^i_{\widehat{Q}}$ such that $k\not\in \I^j_Q$, and  $k \in \bigcup_{h\in \I^i_{\text{GD}}\setminus\lrs{j}}\I^h_Q.$ If $k\not\in \I^j_Q$, then by definition, $k \not\in \I^j_C,$ and $k \not\in \lrs{\mc{R}^l_{SO}}_{l\in \I^j_C}$.  
But, $k \in \bigcup_{h\in \I^i_{\text{GD}}\setminus\lrs{j}}\I^h_Q$ implies that $\exists~h\in \I^i_{\text{GD}}\setminus\lrs{j}$ such that either $k\in \I^h_C$ or $k\in \lrs{\mc{R}^m_{SO}}_{m\in \I^h_C}.$ 

\textbf{Case 1} Let  $k\in \I^h_C$. Then, as $h \in \I^i_{\text{GD}}$, either $i \in \I^h_C$, or $i \in \lrs{\mc{R}^l_{SO}}_{l\in \I^h_C}$.  
\begin{itemize}
    \item If $i \in \I^h_C$, then $\I^i_{C^\T}\cap \I^k_{C^\T} = \lrs{h} \neq \emptyset.$ or,
    \item If $i \in \lrs{\mc{R}^l_{SO}}_{l\in \I^j_C}$, then $\exists$ an $m\in \I^h_C \cap \mc{R}^i_{(SO)^\T}.$ Hence, $\I^m_{C^\T} \cap \I^k_{C^\T} = \lrs{h} \neq \emptyset.$
\end{itemize}

\textbf{Case 2} Let  $k\in \lrs{\mc{R}^m_{SO}}_{m\in \I^h_C}.$ Then, $\exists$ an $p\in \I^h_C \cap \mc{R}^k_{(SO)^\T},$ and as $h \in \I^i_{\text{GD}}$, either $i \in \I^h_C$, or $i \in \lrs{\mc{R}^l_{SO}}_{l\in \I^h_C}$.  
\begin{itemize}
    \item If $i \in \I^j_C$, then $\I^i_{C^\T} \cap \I^p_{C^\T} = \lrs{h} \neq \emptyset.$ or,
    \item If $i \in \lrs{\mc{R}^l_{SO}}_{l\in \I^h_C}$, then 
    $\exists$ an $m\in \I^h_C \cap \mc{R}^i_{(SO)^\T}.$ Hence, $\I^m_{C^\T} \cap \I^p_{C^\T} = \lrs{h} \neq \emptyset.$ 
\end{itemize}

Therefore, in either case we conclude that if $\I^i_Q\subset \I^i_{\widehat{Q}},$ then  $p\in \mc{R}^k_{(SO)^\T},$ $m\in  \mc{R}^i_{(SO)^\T},$ such that $\I^m_{C^\T} \cap \I^p_{C^\T} \subset \I^i_{\text{GD}}.$

\textbf{Sufficient condition.} Consider an $i\in \mc{V}$ and assume that $\exists$ $j\in \I^i_{\text{GD}}$ for which $\exists$ $k\in \mc{V}\setminus \I^j_Q$. Let $h\in \I^i_{\text{GD}}$, $m\in  \mc{R}^i_{(SO)^\T},$ and $p\in \mc{R}^k_{(SO)^\T},$ such that $h \in \I^m_{C^\T} \cap \I^p_{C^\T}.$ Hence, as $p \in \I^h_C,$ by definition $k\in \I^h_Q.$ As $h \in \I^i_{\text{GD}}$, we have that $k\in \I^i_{\widehat{Q}}$. However, $k\not \in \I^j_Q$ implies that $k\in \I^i_{\widehat{Q}}\setminus \I^i_Q$ or $\I^i_Q \subset \I^i_{\widehat{Q}}$
as required.
\end{enumerate}
\end{proof}

\section{Proof of Theorem~\ref{thm:undecsampcomp}}\label{sec:undecsampcomp}

\begin{proof}
For the analysis of the direct case, we first show that for each $i\in \mc{V}$, $||\hat{q}^{\text{true}}_i -\hat{q}^{\text{direct}}_i||$ is analogous to Lemma A.1,~\cite{krauth2019finite} in the single-agent case. For brevity, in the remainder of the proof we denote $\hat{q}^{\text{direct}}_i$ by $\hat{q}_i$.
From $\eqref{eq:eiv_lse}$,  the solution error-in-variables least squares is given by 
\begin{align}
    \hat{q}_i = (\mathbf{\Phi}^\intercal (\mathbf{\Phi} - \mathbf{\Psi}_+ + \mathbf{F}))^{-1} \mathbf{\Phi}^\intercal \mathbf{\hat{c}}_i.
\label{eq:eivlse}
\end{align}
Rearranging the terms in \eqref{eq:eivlse} yields 
\begin{align}
  &\mathbf{\Phi}^\intercal (\mathbf{\Phi} - \mathbf{\Psi}_+ + \mathbf{F})  \hat{q}_i = 
\mathbf{\Phi}^\intercal \mathbf{\hat{c}}_i \Rightarrow \bphi \hat{q}_i = \bphi (\bphi^\intercal\bphi)^{-1}\bphi^\intercal(\mathbf{\hat{c}}_i + (\bpsi_+ -  \mathbf{F}) \hat{q}_i).
\label{eq:phat}
\end{align}
Define $P_\bphi = \bphi (\bphi^\intercal\bphi)^{-1}\bphi^\intercal$ as the orthogonal projection onto the columns of $\bphi.$ Combining \eqref{eq:eiv_ls},~\eqref{eq:phat}, and using the fact that $P_\bphi \bphi = \bphi$ yields
\begin{align}
    P_\bphi(\bphi - \bxi + \mathbf{F})(\hat{q}^{\text{true}}_i - \hat{q}_i) = P_\bphi(\bxi - \bpsi_+)\hat{q}_i .
    \label{eq:errorphat}
\end{align}
The $i^{\text{th}}$ row of $\bphi - \bxi + \mathbf{F}$ can be expressed as,
\begin{align}
    &\text{svec}\bigg(
    \begin{bmatrix}
        x_{\I^i_{\widehat{Q}}}(t)\\
        u_{\I^i_{\widehat{Q}}}(t)
    \end{bmatrix} \begin{bmatrix}
        x_{\I^i_{\widehat{Q}}}(t)\\
        u_{\I^i_{\widehat{Q}}}(t)
\end{bmatrix}^\intercal - \mathbb{E}\left[\begin{bmatrix}
        x_{\I^i_{\widehat{Q}}}(t+1) \\
        K_{\I^i_{\widehat{Q}}}(x_{\I^i_{\widehat{Q}}}(t+1))
    \end{bmatrix} \begin{bmatrix}
        x_{\I^i_{\widehat{Q}}}(t+1) \\
        K_{\I^i_{\widehat{Q}}}(x_{\I^i_{\widehat{Q}}}(t+1))
\end{bmatrix}^\intercal\right]+\sigma^2_w\begin{bmatrix}
    \mb{I}\\ K_{\I^i_{\widehat{Q}}}
\end{bmatrix}\begin{bmatrix}
    \mb{I}\\ K_{\I^i_{\widehat{Q}}}
\end{bmatrix}^\T
    \bigg),\nonumber\\
&=       \text{svec}\bigg(
    \begin{bmatrix}
        x_{\I^i_{\widehat{Q}}}(t)\\
        u_{\I^i_{\widehat{Q}}}(t)
    \end{bmatrix} \begin{bmatrix}
        x_{\I^i_{\widehat{Q}}}(t)\\
        u_{\I^i_{\widehat{Q}}}(t)
\end{bmatrix}^\intercal - L \begin{bmatrix}
        x_{\I^i_{\widehat{Q}}}(t)\\
        u_{\I^i_{\widehat{Q}}}(t)
    \end{bmatrix} \begin{bmatrix}
        x_{\I^i_{\widehat{Q}}}(t)\\
        u_{\I^i_{\widehat{Q}}}(t) \end{bmatrix}^\intercal L^\intercal
    \bigg) =  (\mathbb{I} - L \otimes_s L)\phi_t,\nonumber\\
    &\text{where } L = \begin{bmatrix}
        A_{\I^i_{\widehat{Q}}}& B_{\I^i_{\widehat{Q}}} \\   K_{\I^i_{\widehat{Q}}}A_{\I^i_{\widehat{Q}}}& K_{\I^i_{\widehat{Q}}}B _{\I^i_{\widehat{Q}}}
    \end{bmatrix}.
    \label{eq:irow}
    \end{align}
Combining~\eqref{eq:errorphat},~\eqref{eq:irow} and assuming that $\bphi$ is full column rank, we obtain
\begin{align}
    &\bphi (\mathbb{I} - L \otimes_s L)^\intercal (\hat{q}^{\text{true}}_i - \hat{q}_i) = P_\bphi(\bxi - \bpsi_+)\hat{q}_i\nonumber\\
    &\Rightarrow (\mathbb{I} - L \otimes_s L)^\intercal (\hat{q}^{\text{true}}_i - \hat{q}_i) = (\bphi^\intercal\bphi)^{-1}\bphi^\intercal(\bxi - \bpsi_+)\hat{q}_i.
    \label{eq:errlse1}
\end{align}
Let $\sigma_{\text{min}}(\cdot)$ denote the minimum singular value of  a matrix. Then, we have that 
\begin{align}
    ||(\mathbb{I} - L \otimes_s L)^\intercal (\hat{q}^{\text{true}}_i - \hat{q}_i)|| &\geq \sigma_{\text{min}}(\mathbb{I} - L \otimes_s L)||\hat{q}^{\text{true}}_i - \hat{q}_i||,\label{eq:minsingdenbound}\\
    ||(\bphi^\intercal\bphi)^{-1}\bphi^\intercal(\bxi - \bpsi_+)\hat{q}_i|| &\leq \sigma_{\text{max}}((\bphi^\intercal \bphi)^{-\frac{1}{2}})||(\bphi^\intercal\bphi)^{-\frac{1}{2}}\bphi^\intercal(\bxi - \bpsi_+)\hat{q}_i||\nonumber\\
    &= \lambda_{\text{max}}((\bphi^\intercal \bphi)^{-\frac{1}{2}})||(\bphi^\intercal\bphi)^{-\frac{1}{2}}\bphi^\intercal(\bxi - \bpsi_+)\hat{q}_i||\nonumber\\
   &\hspace{-100pt}\text{ ($\because$ $\bphi^\intercal \bphi$ is symmetric and P.S.D,$(\bphi^\intercal \bphi)^{-\frac{1}{2}}$ is symmetric and P.S.D.)}\nonumber\\
    &= \dfrac{||(\bphi^\intercal\bphi)^{-\frac{1}{2}}\bphi^\intercal(\bxi - \bpsi_+)\hat{q}_i||}{{\lambda_{\text{min}}((\bphi^\intercal \bphi)^{\frac{1}{2}})}}\nonumber\\
 &= \dfrac{||(\bphi^\intercal\bphi)^{-\frac{1}{2}}\bphi^\intercal(\bxi - \bpsi_+)\hat{q}_i||}{\sqrt{\lambda_{\text{min}}(\bphi^\intercal \bphi)}}\nonumber\\
 &= \dfrac{||(\bphi^\intercal\bphi)^{-\frac{1}{2}}\bphi^\intercal(\bxi - \bpsi_+)\hat{q}_i||}{\sigma_{\text{min}}(\bphi)}
    \label{eq:minsingnumbound}
\end{align}

Combining~\eqref{eq:minsingdenbound},~\eqref{eq:minsingnumbound}, and~ \eqref{eq:errlse} yields
\begin{align}
    &\sigma_{\text{min}}(\mathbb{I} - L \otimes_s L)||\hat{q}^{\text{true}}_i - \hat{q}_i|| \leq \dfrac{||(\bphi^\intercal\bphi)^{-\frac{1}{2}}\bphi^\intercal(\bxi - \bpsi_+)\hat{q}_i||}{\sigma_{\text{min}}(\bphi)}\nonumber\\
    &\Rightarrow ||\hat{q}^{\text{true}}_i - \hat{q}_i|| \leq \dfrac{||(\bphi^\intercal\bphi)^{-\frac{1}{2}}\bphi^\intercal(\bxi - \bpsi_+)\hat{q}_i||}{\sigma_{\text{min}}(\bphi) \sigma_{\text{min}}(\mathbb{I} - L \otimes_s L)}\nonumber\\
    &\leq \dfrac{||(\bphi^\intercal\bphi)^{-\frac{1}{2}}\bphi^\intercal(\bxi - \bpsi_+)|| ||\hat{q}^{\text{true}}_i - \hat{q}_i||}{\sigma_{\text{min}}(\bphi) \sigma_{\text{min}}(\mathbb{I} - L \otimes_s L)} + \dfrac{||(\bphi^\intercal\bphi)^{-\frac{1}{2}}\bphi^\intercal(\bxi - \bpsi_+)\hat{q}^{\text{true}}_i ||}{\sigma_{\text{min}}(\bphi) \sigma_{\text{min}}(\mathbb{I} - L \otimes_s L)}\nonumber\\
    &\hspace{50pt}\text{(By triangle inequality and Cauchy-Scwartz inequality)}
\end{align}
If $\dfrac{||(\bphi^\intercal\bphi)^{-\frac{1}{2}}\bphi^\intercal(\bxi - \bpsi_+)||}{\sigma_{\text{min}}(\bphi) \sigma_{\text{min}}(\mathbb{I} - L \otimes_s L)} < \frac{1}{2},$ then
\begin{align}
    ||\hat{q}^{\text{true}}_i - \hat{q}_i|| \leq 2\dfrac{||(\bphi^\intercal\bphi)^{-\frac{1}{2}}\bphi^\intercal(\bxi - \bpsi_+)\hat{q}^{\text{true}}_i ||}{\sigma_{\text{min}}(\bphi) \sigma_{\text{min}}(\mathbb{I} - L \otimes_s L)}.
    \label{eq:errlse}
\end{align}

Observe that for each $i\in \mc{V},$ due to Lemma~\ref{lm:rtc},~\eqref{eq:errlse} is analogous to a single agent setting with state dimension $n^i_{\widehat{x}} = n_x |\I^i_{\widehat{Q}}|$, and control dimension $n^i_{\widehat{u}} = n_u |\I^i_{\widehat{Q}}|$. In the interest of space, we omit the details of the proof and provide the bound analogous to~\cite{krauth2019finite} for the direct case. Under the pre-conditions stated in Theorem~\ref{thm:undecsampcomp}, if the trajectory length $T$ satisfies
$$T \geq \tilde{O}(1) \text{max}\bigg\{(n^i_{\widehat{x}} + n^i_{\widehat{u}})^2, \dfrac{( n^i_{\widehat{x}})^2(n^i_{\widehat{x}} + n^i_{\widehat{u}})^2 ||\widehat{K}^{\text{play}}_{\I^i_{\widehat{Q}}}||^4_+}{\sigma^4_\eta}\sigma^2_w \bar{\sigma}^2_i\dfrac{\tau^4||K_{\I^i_{\widehat{Q}}}||^8_+ {{(||A_{\I^i_{\widehat{Q}}}||^2 + ||B_{\I^i_{\widehat{Q}}}||^2)^2}}}{\rho^4(1-\rho^2)^2}\bigg\},$$
then with probability at least $1-\delta$, we have that
$$||\hat{q}^{\text{true}}_i -\hat{q}_i|| \leq  \dfrac{\tilde{O}(1) (n^i_{\widehat{x}} + n^i_{\widehat{u}})||K^{\text{play}}_{\I^i_{\widehat{Q}}}||^2_+ }{\sigma^2_\eta
     \sqrt{T}}\sigma_w \bar{\sigma}_i||\widehat{Q}^{\text{true}}_i||_F \dfrac{\tau^2||K_{\I^i_{\widehat{Q}}}||^4_+ {{(||A_{\I^i_{\widehat{Q}}}||^2 + ||B_{\I^i_{\widehat{Q}}}||^2)}}}{\rho^2(1-\rho^2)},$$
     where $\tilde{O}(1)$ hides $\text{ polylog}\left(\dfrac{T}{\delta},~\dfrac{1}{\sigma^4_\eta},~\tau,~n^i_{\widehat{x}},~||\Sigma_0||,~||K^{\text{play}}_{\I^i_{\widehat{Q}}}||,~||\mathfrak{P}_\infty|| \right).$
\end{proof}

\section{Analysis of the indirect case}\label{sec:decsampcomp}
Define  $n^i_{{x}} = n_x |\I^i_{{Q}}|,$ and  $n^i_{{u}} = n_u |\I^i_{{Q}}|$.
\begin{corollary}
Consider $\delta\in (0,1).$  Let the initial global state and the global control (during sample generation) $\forall~t$ satisfy $x(0) \sim \mc{N}\lrp{x_0, \Sigma_0},$ $u(t) = K^{\text{play}} x(t) \!+\! \eta_t,~\eta(t)\sim \mc{N}(\mbf{0},\sigma^2_\eta \mb{I}_{Nn_u}),$ and $\sigma_\eta \leq \sigma_w$. For each $i\in \mc{V},$ let $K^{\text{play}}_{\I^i_{{Q}}},~K_{\I^i_{{Q}}}$ stabilize $(A_{\I^i_{{Q}}},B_{\I^i_{{Q}}}).$ Assume that $A_{\I^i_{{Q}}} + B_{\I^i_{{Q}}}K_{\I^i_{{Q}}}$ and $A_{\I^i_{{Q}}} + B_{\I^i_{{Q}}}K^{\text{play}}_{\I^i_{{Q}}}$ are $(\tau,\rho)$-stable. 
Let $\mathfrak{P}_\infty = \mc{L}\lrp{A_{\I^i_{{Q}}}+B_{\I^i_{{Q}}}K_{\I^i_{{Q}}}, \sigma^2_w \mb{I}_{n^i_x} + \sigma^2_\eta B_{\I^i_{{Q}}} B^\T_{\I^i_{{Q}}}}$ and $\bar{\sigma}_i = \sqrt{\tau^2 \rho^{4}||\Sigma^{{x}}_0|| + ||\mathfrak{P}_\infty|| + \sigma^2_w + \sigma^2_\eta||B_{\I^i_{{Q}}}||^2}.$ Further, $\forall$ $i\in \mc{V}$, let $T_i$ denote the minimum number of samples required during learning. Suppose that  
$$T_i \geq \tilde{O}(1) \text{ max} \left\{(n^i_{x} + n^i_{u})^2, \dfrac{ (n^i_{x})^2(n^i_{x} + n^i_{u})^2 ||K^{\text{play}}_{\I^j_Q}||^4_+}{\sigma^4_\eta}\sigma^2_w \bar{\sigma}^2_i\dfrac{\tau_i^4||K_{\I^j_Q}||^8_+ {{(||A_{\I^j_Q}||^2 + ||B_{\I^j_Q}||^2)^2}}}{\rho_i^4(1-\rho_i^2)^2}\right\}.$$ Then, with probability $1-\delta,$
\begin{align*}
    &\norm{ \hat{q}^{\text{true}}_i - \hat{q}^{\text{indirect}}_i } \leq\sum_{j \in \I^i_{\text{GD}}}\dfrac{\tilde{O}(1) (n^j_{x} + n^j_{u})||K^{\text{play}}_{\I^j_Q}||^2_+ }{\sigma^2_\eta
    \sqrt{T}}
     \sigma_w \bar{\sigma}_j ||Q^{\text{true}}_j||_F \dfrac{\tau_j^2||K_{\I^j_Q}||^4_+ {{(||A_{\I^j_Q}||^2 + ||B_{\I^j_Q}||^2)}}}{\rho_j^2(1-\rho_j^2)}
\end{align*}
whenever $T \geq$ max $\lrs{T_j}_{j\in \I^i_{\text{GD}}},$  where $\tilde{O}(1)$ hides $\text{polylog}\left(\dfrac{T}{\delta},~\dfrac{1}{\sigma^4_\eta},~\tau,~n_{{x}},~||\Sigma_0||,~||K^{\text{play}}_{\I^i_{{Q}}}||,~||\mathfrak{P}_\infty|| \right).$
    \label{corollary:decsampcomp}
\end{corollary}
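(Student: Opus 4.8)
The plan is to reduce the indirect-case bound to repeated application of Theorem~\ref{thm:undecsampcomp}. Recall from Section~\ref{sec:decompQ} that $\widehat{Q}_i = \sum_{j\in\I^i_{\text{GD}}} Q_j(x_{\I^j_Q},u_{\I^j_Q})$ and that $\hat{q}^{\text{true}}_i=\text{svec}(\widehat{Q}^{\text{true}}_i)$ decomposes accordingly. In the indirect architecture, agent $i$ does not estimate $\widehat{Q}_i$ directly; instead, each agent $j$ estimates its own $Q_j$ via LSTDQ on the reduced coordinates $x_{\I^j_Q},u_{\I^j_Q}$, and agent $i$ forms $\hat{q}^{\text{indirect}}_i = \sum_{j\in\I^i_{\text{GD}}} P_j\, \hat q^{\text{indirect}}_j$, where $P_j$ is the linear lift (a suitable $\text{svec}$-level padding matrix, built from the projection operators $P^{n}_{\mc{S}_1,\mc{S}_2}$ already defined in Section~\ref{sec:alg}) that embeds the $\I^j_Q$-coordinates into the $\I^i_{\widehat{Q}}$-coordinates. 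First I would write $\hat q^{\text{true}}_i - \hat q^{\text{indirect}}_i = \sum_{j\in\I^i_{\text{GD}}} P_j\,(\hat q^{\text{true}}_j - \hat q^{\text{indirect}}_j)$, apply the triangle inequality, and note $\|P_j\|=1$ since the lift is an isometry on the relevant subspace. This immediately gives $\|\hat q^{\text{true}}_i - \hat q^{\text{indirect}}_i\|\le \sum_{j\in\I^i_{\text{GD}}}\|\hat q^{\text{true}}_j - \hat q^{\text{indirect}}_j\|$.

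Next I would invoke Theorem~\ref{thm:undecsampcomp} separately for each $j\in\I^i_{\text{GD}}$, but with $Q_j$ playing the role of the estimated Q-function. The point is that for each $j$, estimating $Q_j$ as a quadratic function of $x_{\I^j_Q},u_{\I^j_Q}$ is, by Lemma~\ref{lm:rtc} (closedness of $\I^j_Q$ under state transition and control), formally identical to the single-agent LSTDQ setting with state dimension $n^j_x = n_x|\I^j_Q|$ and control dimension $n^j_u=n_u|\I^j_Q|$; the dynamics matrices become $A_{\I^j_Q},B_{\I^j_Q}$ and the closed-loop gain $K_{\I^j_Q}$, exactly as in the proof of Theorem~\ref{thm:undecsampcomp} but with $\widehat{Q}$ replaced by $Q$ throughout. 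Under the stated $(\tau,\rho)$-stability and stabilizability preconditions for each $(A_{\I^j_Q},B_{\I^j_Q})$, the theorem yields, on an event of probability at least $1-\delta/|\I^i_{\text{GD}}|$ (rescaling $\delta$ and absorbing the $\log|\I^i_{\text{GD}}|$ into $\tilde O(1)$), the per-agent bound with $\|Q^{\text{true}}_j\|_F$, $\|K^{\text{play}}_{\I^j_Q}\|$, etc. Requiring $T\ge\max_{j\in\I^i_{\text{GD}}}T_j$ ensures each per-agent sample-size hypothesis holds, and a union bound over $j\in\I^i_{\text{GD}}$ combines the events. Summing the per-agent bounds gives exactly the claimed inequality.

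The main obstacle is not analytic but bookkeeping: one must verify that the experience shared through $\mc{G}_{\text{GD}}$ lets agent $i$ actually assemble $\sum_j P_j\hat q^{\text{indirect}}_j$ consistently (i.e.\ the $\I^j_Q$ are all subsets of $\I^i_{\widehat{Q}}$, which is where Lemma~\ref{lm:rtc} and the definition $\I^i_{\widehat{Q}}=\bigcup_{j\in\I^i_{\text{GD}}}\I^j_Q$ are invoked), and that the lift $P_j$ is genuinely an isometry at the $\text{svec}$ level so no dimension-dependent constant sneaks in from the embedding. A secondary subtlety is that the sample size $T$ is shared across all $j$ via the common rollout $\mc{D}^{(l)}$, so the per-agent estimates $\hat q^{\text{indirect}}_j$ are statistically coupled; however, since we only need a high-probability bound on each individually and then union-bound, this coupling is harmless and does not affect the argument. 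I would also remark that the weights $w^i_j$ mentioned in the main text do not appear in the corollary statement itself — they enter only in the subsequent interpretation of the sample complexity (Appendix~\ref{app:remark}), so the proof of the corollary proper does not need them.
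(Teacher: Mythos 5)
Your proposal is correct and follows essentially the same route as the paper's proof: lift each per-agent estimate $q_j$ into the $\I^i_{\widehat{Q}}$-coordinates via a norm-preserving padding operator, apply the triangle inequality to reduce the error to $\sum_{j\in\I^i_{\text{GD}}}\|q^{\text{true}}_j-q_j\|$, and then invoke Theorem~\ref{thm:undecsampcomp} on each reduced system $(A_{\I^j_Q},B_{\I^j_Q},K_{\I^j_Q})$ with dimensions $n^j_x,n^j_u$, requiring $T\geq\max_j T_j$. Your explicit union bound over $j\in\I^i_{\text{GD}}$ with $\delta/|\I^i_{\text{GD}}|$ is a small refinement the paper leaves implicit, but it does not change the argument.
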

\begin{proof}
For brevity, in the remainder of the proof denote $\phi_t = \text{svec}\left(
    \begin{bmatrix}
        x_{\I^i_{Q}}(t)\\
        u_{\I^i_{Q}}(t)
    \end{bmatrix} \begin{bmatrix}
        x_{\I^i_{Q}}(t)\\
        u_{\I^i_{Q}}(t)
\end{bmatrix}^\intercal
    \right),$\\
$\psi_t = \text{svec}\left(
    \begin{bmatrix}
        x_{\I^i_{Q}}(t)\\
        K_{\I^i_{Q}} x_{\I^i_{Q}}(t)
    \end{bmatrix} \begin{bmatrix}
        x_{\I^i_{Q}}(t)\\
        K_{\I^i_{Q}} x_{\I^i_{Q}}(t)
\end{bmatrix}^\intercal
    \right),$ $f = \text{svec}\left(\begin{bmatrix}
            \Sigma^x_{\I^i_{Q}} & \Sigma^x_{\I^i_{Q}} K^\intercal_{\I^i_{Q}}\\
            K_{\I^i_{Q}}\Sigma^x_{\I^i_{Q}}& K_{\I^i_{\widehat{Q}}}\Sigma^x_{\I^i_{Q}}K^\intercal_{\I^i_{Q}}
        \end{bmatrix}\right),$ and 
$\xi_t = \mathbb{E}\left[\text{svec}\left(
    \begin{bmatrix}
        x_{\I^i_{Q}}(t\!+\!1)\\
        u_{\I^i_{Q}}(t\!+\!1)
    \end{bmatrix} \begin{bmatrix}
        x_{\I^i_{Q}}(t\!+\!1)\\
        u_{\I^i_{Q}}(t\!+\!1)
\end{bmatrix}^\intercal
    \right)\right].$
Employing a \textit{linear architecture}, the Q-function for each agent $i$ can be expressed as
\begin{align}
  & c_i(x_{\I^i_{Q}}(t),u_{\I^i_{Q}}(t)) = \lambda + \left[\phi_t - \xi_t \right]\text{svec}(Q^{\text{true}}_i),
    \label{eq:linearbellman}
\end{align}
where $\lambda \in \R$ is a free parameter to satisfy the fixed point equation. Let $\lambda = \left\langle Q^{\text{true}}_i, \sigma^2_w\begin{bmatrix} \mb{I}_{n_x|I^i_Q|}\\  K^\intercal_{\I^i_{Q}}\end{bmatrix}\begin{bmatrix} \mb{I}_{n_x|I^i_Q|}\\  K^\intercal_{\I^i_{Q}}\end{bmatrix}^\T
         \right\rangle.$
For a single trajectory $\left\{x_{\I^i_{Q}}(t),u_{\I^i_{Q}}(t),x_{\I^i_{Q}}(t\!+\!1)\right\}_{t=1}^{T_i},$
the bellman equation for agent $i$ can be expressed in matrix form as
\begin{align}
    \mathbf{c}_i = (\mathbf{\Phi} - \mathbf{\Xi} + \mathbf{F})q^{\text{true}}_i,
    \label{eq:eivlsdec}
\end{align}
$\text{where }\mathbf{\Phi}^\T = [\phi_1,\phi_2, \cdots, \phi_{T_i}],~\mathbf{\Xi}^\T = [\xi_1, \xi_2, \cdots, \xi_{T_i}],~\mathbf{c}^\T_i = [c_i(1),c_i(2), \cdots, c_i({T_i})],~\mathbf{F}^\T = [f_1, f_2, \cdots, f_{T_i}]$.
Observe that~\eqref{eq:eivlsdec} is analogous to~\eqref{eq:eiv_ls}. Thus, using Theorem~\ref{thm:undecsampcomp}, we can conclude that if $T_i$ satisfies
\begin{align}
T_i \geq \tilde{O}(1) \text{ max} \left\{(n^i_{x} + n^i_{u})^2, \dfrac{ (n^i_{x})^2(n^i_{x} + n^i_{u})^2 ||K^{\text{play}}_{\I^j_Q}||^4_+}{\sigma^4_\eta}\sigma^2_w \bar{\sigma}^2_i\dfrac{\tau_i^4||K_{\I^j_Q}||^8_+ {{(||A_{\I^j_Q}||^2 + ||B_{\I^j_Q}||^2)^2}}}{\rho_i^4(1-\rho_i^2)^2}\right\},
    \label{eq:Tboundind1}
\end{align}
where $\bar{\sigma}_i = \sqrt{\tau^2 \rho^{4}||\Sigma^{{x}}(0)|| + ||\mathfrak{P}_\infty|| + \sigma^2_w + \sigma^2_\eta||B_{\I^i_{{Q}}}||^2}.$
Then,
\begin{align}
    ||q^{\text{true}}_i -q_i|| &\leq \dfrac{\tilde{O}(1) (n^i_{x} + n^i_{u})||K^{\text{play}}_{\I^i_Q}||^2_+ }{\sigma^2_\eta
    \sqrt{T}_i}
     \sigma_w \bar{\sigma}_i ||Q^{\text{true}}_i||_F \dfrac{\tau_i^2||K_{\I^i_Q}||^4_+ {{(||A_{\I^i_Q}||^2 + ||B_{\I^i_Q}||^2)}}}{\rho_i^2(1-\rho_i^2)}
     \label{eq:qboundind}
\end{align}
However, note that in general $\hat{q}^{\text{indirect}}_i\neq \sum_{j\in \I^i_{\text{GD}}} q_j$ as the agents might not correspond to each other if $\I^j_Q \neq \I^k_Q$, $\forall~j,k\in \I^i_{\text{GD}}$. Hence, to make the dimensions consistent, and compute the estimated local Q-function for each $j\in \I^i_{\text{GD}}$, we define a projection operator $\mc{P}^j_Q =$ blk\_diag$(P^{n_x}_{\I^j_Q, \I^i_{\widehat{Q}}}, P^{n_u}_{\I^j_Q, \I^i_{\widehat{Q}}})$, where $P^{n}_{\mathcal{S}_1,\mathcal{S}_2}$ is the projection defined in Section~\ref{sec:alg}. Then, we have that \\$\hat{q}^{\text{indirect}}_i = \sum_{j\in \I^i_{\text{GD}}} \text{svec}\lrp{(\mc{P}^j_Q)^\T \text{smat}(q_j)\mc{P}^j_Q}$, and $\hat{q}^{\text{true}}_i $ $= \sum_{j\in \I^i_{\text{GD}}} \text{svec}\lrp{(\mc{P}^j_Q)^\T \text{smat}(q^{\text{true}}_j)\mc{P}^j_Q}$.
Therefore,  the error in estimation of $\hat{q}^{\text{true}}_i$ in the indirect case can be expressed as
\begin{align}
     \norm{ \hat{q}^{\text{true}}_i - \hat{q}^{\text{indirect}}_i } &= \norm{\sum_{j\in \I^i_{\text{GD}}} \text{svec}\lrp{(\mc{P}^j_Q)^\T \text{smat}(q^{\text{true}}_j)\mc{P}^j_Q} - \sum_{j\in \I^i_{\text{GD}}} \text{svec}\lrp{(\mc{P}^j_Q)^\T \text{smat}(q_j)\mc{P}^j_Q}} \nonumber\\
     &=\norm{\sum_{j\in \I^i_{\text{GD}}} \text{svec}\lrp{(\mc{P}^j_Q)^\T (\text{smat}(q^{\text{true}}_j- q_j))\mc{P}^j_Q}}~\text{ (Due to the linearity of svec$(\cdot)$, smat$(\cdot)$)}\nonumber\\
     &= \norm{\sum_{j\in \I^i_{\text{GD}}} (q^{\text{true}}_j- q_j) } ~\text{ ($\because$ $||q_j|| = ||(\mc{P}^j_Q)^\T \text{smat}(q_j)\mc{P}^j_Q||$ $\forall~j$)}\nonumber\\
    &\leq \sum_{j \in\I^i_{\text{GD}}} \norm{q^{\text{true}}_j- q_j}~~\text{(Using triangle inequality)}.
    \label{eq:qdecerr}
\end{align}
Combining~\eqref{eq:Tboundind1},~\eqref{eq:qboundind}, and~\eqref{eq:qdecerr}, we obtain that whenever the length of trajectory (number of samples) satisfies
\begin{align}
    T \geq \text{ max }\{T_j\}_{j\in \I^i_{\text{GD}}}
\end{align}
\label{eq:Tbounddec}
then with probability $1-\delta$, we have 
\begin{align}
    &\norm{ \hat{q}^{\text{true}}_i - \hat{q}^{\text{indirect}}_i } \leq \sum_{j \in \I^i_{\text{GD}}}\dfrac{\tilde{O}(1) (n^j_{x} + n^j_{u})||K^{\text{play}}_{\I^j_Q}||^2_+ }{\sigma^2_\eta
    \sqrt{T}_j}
     \sigma_w \bar{\sigma}_j ||Q^{\text{true}}_j||_F \dfrac{\tau_j^2||K_{\I^j_Q}||^4_+ {{(||A_{\I^j_Q}||^2 + ||B_{\I^j_Q}||^2)}}}{\rho_j^2(1-\rho_j^2)},
     \label{eq:suffsamp}
\end{align}
where $\tilde{O}(1)$ hides $\text{polylog}\left(\dfrac{T}{\delta},~\dfrac{1}{\sigma^4_\eta},~\tau,~n_{{x}},~||\Sigma_0||,~||K^{\text{play}}_{\I^i_{{Q}}}||,~||\mathfrak{P}_\infty|| \right)$.
\end{proof}

\section{Remark on the sample complexity of the indirect case}\label{app:remark}
{For some $i\in \mc{V}$}, define {$w^i_1,~w^i_2,~\cdots,~w^i_{|\I^i_{\text{GD}}|} \in [0,1]$} such that {$\sum_{j=1}^{|\I^i_{\text{GD}}|} w^i_j = 1$}. Then, from~\eqref{eq:suffsamp}, observe that to achieve $\norm{ \hat{q}^{\text{true}}_i - \hat{q}^{\text{indirect}}_i }
\leq \epsilon,$ it is sufficient that every $j\in \I^i_{\text{GD}}$ satisfies $\norm{q^{\text{true}}_j- q_j} \leq {w^i_j} \epsilon$. 
From~\eqref{eq:qboundind}, we have that for any $j \in \I^i_{\text{GD}}$, $q_j$ to be $({w^i_j }\epsilon)-$optimal requires 
\begin{align*}
    T_j \leq\text{ max }\left(\dfrac{(\tilde{O}(1))^2 W_j^2 (n^j_{x} + n^j_{u})^3}{ \sigma_\eta^4 {(w^i_j)^2} \epsilon^2}||Q^{
    \text{true}}_j||^2, \dfrac{\tilde{O}(1) W_j^2 (n^j_{x})^2(n^j_{x} + n^j_{u})^2}{ \sigma_\eta^4}\right)~\text{samples.}
\end{align*}
Therefore, we conclude that to ensure $\norm{ \hat{q}^{\text{true}}_i - \hat{q}^{\text{indirect}}_i }
\leq \epsilon$, it is sufficient to have 
\begin{align*}
    T_{\text{indirect}} \leq \underset{j\in \I^i_{\text{GD}}}{\text{ max }}\left(\text{ max }\left(\dfrac{(\tilde{O}(1))^2 W_j^2 (n^j_{x} + n^j_{u})^3}{\sigma_\eta^4 {(w^i_j)^2} \epsilon^2}||Q^{
    \text{true}}_j||^2, \dfrac{\tilde{O}(1) W_j^2 (n^j_{x})^2(n^j_{x} + n^j_{u})^2}{ \sigma_\eta^4}\right)\right)~\text{samples,}
\end{align*}
 where $W_j = ||K^{\text{play}}_{\I^j_{\widehat{Q}}}||^2_+ \sigma_w \bar{\sigma}_j \dfrac{\tau^2||K_{\I^j_{\widehat{Q}}}||^4_+ {{(||A_{\I^j_{\widehat{Q}}}||^2 + ||B_{\I^j_{\widehat{Q}}}||^2)}}}{\rho^2(1-\rho^2)}$.
 
For the same $\hat{q}^{\text{true}}_i$ in the direct case, we know from Theorem~\ref{thm:undecsampcomp} that achieving $\epsilon-$optimal estimate requires at most 
\begin{align}
    T_{\text{direct}}  &\leq \text{max}\left(\dfrac{(\tilde{O}(1))^2 W_i^2 (n^i_{\widehat{x}} + n^i_{\widehat{u}})^3}{ \sigma_\eta^4 \epsilon^2}||\widehat{Q}^{\text{true}}_i||^2, \dfrac{\tilde{O}(1) W_i^2 (n^i_{\widehat{x}})^2(n^i_{\widehat{x}} + n^i_{\widehat{u}})^2}{ \sigma_\eta^4}\right)\nonumber\\
    &\leq \text{max}\left(\dfrac{(\tilde{O}(1))^2 W_i^2 (n^i_{\widehat{x}} + n^i_{\widehat{u}})^3}{ \sigma_\eta^4 \epsilon^2}\lrp{\sum_{j\in \I^i_{\text{GD}}}\norm{ Q^{\text{true}}_j}}^2, \dfrac{\tilde{O}(1) W_i^2 (n^i_{\widehat{x}})^2(n^i_{\widehat{x}} + n^i_{\widehat{u}})^2}{ \sigma_\eta^4}\right)
    \label{eq:dirT}
\end{align}
 where $W_i = ||K^{\text{play}}_{\I^i_{\widehat{Q}}}||^2_+ \sigma_w \bar{\sigma}_i \dfrac{\tau^2||K_{\I^i_{\widehat{Q}}}||^4_+ {{(||A_{\I^i_{\widehat{Q}}}||^2 + ||B_{\I^i_{\widehat{Q}}}||^2)}}}{\rho^2(1-\rho^2)}$.
We now provide an example on choosing the relative estimation weights {$w^i_j$} to achieve better sample efficiency in the indirect case compared to the direct case. Letting ${w^i_j} = ||Q^{\text{true}}_j||\lrp{\sum_{j\in \I^i_{\text{GD}}}\norm{ Q^{\text{true}}_j}}^{-1}$ yields
\begin{align}
  T_{\text{indirect}} &\leq \underset{j\in \I^i_{\text{GD}}}{\text{ max }}\left(\text{ max }\left(\dfrac{(\tilde{O}(1))^2 W_j^2 (n^j_{x} + n^j_{u})^3}{ \sigma_\eta^4 \epsilon^2}\lrp{\sum_{j\in \I^i_{\text{GD}}}\norm{ Q^{\text{true}}_j}}^2, \dfrac{\tilde{O}(1) W_j^2 (n^j_{x})^2(n^j_{x} + n^j_{u})^2}{ \sigma_\eta^4}\right)\right).
  \label{eq:indirT}
\end{align}
Note that the right hand side of~\eqref{eq:dirT} is equal to~\eqref{eq:indirT} only if there exists $j\in \I^i_{\text{GD}}$ such that $\I^j_Q = \I^i_{\widehat{Q}}$, otherwise~\eqref{eq:dirT} is strictly greater. Thus, for any $i\in \mc{V}$, $\forall~j\in \I^i_{GD},$ {$w^i_j= ||Q^{\text{true}}_j||\lrp{\sum_{j\in \I^i_{\text{GD}}}\norm{ Q^{\text{true}}_j}}^{-1}$}  ensures that the worst case sample complexity of the indirect decomposition based Algorithm~\ref{alg:malspi} is equal to the direct case and strictly better if  $\I^j_{{Q}} \subset \I^i_{\widehat{Q}},~\forall~j\in\I^i_{\text{GD}}$. We also note that the choice of {$w^i_j$} is not unique. Finding optimal weights w.r.t. the overall sample efficiency is a problem in its own interest and deferred to possible future work.

\section{Illustration of the 8-agent in the simulation examples}

\textbf{Example 1}

\begin{tikzpicture}[scale= 0.7, transform shape]
\begin{scope}
    % Define nodes and edges for the first network
    \node[latent,draw=blue!50] (A) {$1$};
    \node[latent,draw=blue!50, ,right=of A] (B) {$4$};
    \node[latent,draw=blue!50, below=of A] (C) {$2$};
    \node[latent,draw=blue!50, right=of C] (D) {$3$};
    \node[latent,draw=blue!50, right=of B] (E) {$6$};
    \node[latent,draw=blue!50 , right=of D] (G) {$5$};
    \node[latent,draw=blue!50, right=of G] (H) {$7$};
    \node[latent,draw=blue!50, right=of E] (I) {$8$};
    \node[above=of E] {$\mathcal{G}_R$};
    \path (A) edge [loop above] node (A'){}(A);
    \path (B) edge [loop above] node (B'){}(B);
    \path (C) edge [loop below] node (C'){}(C);
    \path (D) edge [loop below] node (D'){}(D);
    \path (E) edge [loop above] node (E'){}(E);
    \path (G) edge [loop below] node (G'){}(G);
    \path (H) edge [loop below] node (H'){}(H);
    \path (I) edge [loop above] node (I'){}(I);
\end{scope}
\begin{scope}[yshift=-1cm]
%Dashed vertical line
\draw[dashed] (7,-1.5) -- (7,3);
\end{scope}
\begin{scope}[xshift =8cm]
    % Define nodes and edges for the first network
    \node[latent,draw=blue!50] (A) {$1$};
    \node[latent,draw=blue!50, ,right=of A] (B) {$3$};
    \node[latent,draw=blue!50, below=of A] (C) {$2$};
    \node[latent,draw=blue!50, right=of C] (D) {$4$};
    \node[latent,draw=blue!50, right=of B] (E) {$5$};
    \node[latent,draw=blue!50 , right=of D] (G) {$6$};
    \node[latent,draw=blue!50, right=of G] (H) {$8$};
    \node[latent,draw=blue!50, right=of E] (I) {$7$};
    \node[above=of E] {$\mathcal{G}_O, \mathcal{G}_S,\mathcal{G}_Q$};
    \path (A) edge [loop above] node (A'){}(A);
    \path (B) edge [loop above] node (B'){}(B);
    \path (C) edge [loop below] node (C'){}(C);
    \path (D) edge [loop below] node (D'){}(D);
    \path (E) edge [loop above] node (E'){}(E);
    \path (G) edge [loop below] node (G'){}(G);
    \path (H) edge [loop below] node (H'){}(H);
    \path (I) edge [loop above] node (I'){}(I);
    \edge {A} {C};
    \edge {B} {C};
    \edge {B} {D};
    \edge {A} {H};
    \edge {E} {D};
    \edge {E} {G};
    \edge {I} {G};
    \edge {I} {H};
\end{scope}
\end{tikzpicture}

\textbf{Example 2}

\begin{tikzpicture}[scale= 0.7, transform shape]
\begin{scope}
    % Define nodes and edges for the first network
    \node[latent,draw=blue!50] (A) {$1$};
    \node[latent,draw=blue!50, ,right=of A] (B) {$4$};
    \node[latent,draw=blue!50, below=of A] (C) {$2$};
    \node[latent,draw=blue!50, right=of C] (D) {$3$};
    \node[latent,draw=blue!50, right=of B] (E) {$6$};
    \node[latent,draw=blue!50 , right=of D] (G) {$5$};
    \node[latent,draw=blue!50, right=of G] (H) {$7$};
    \node[latent,draw=blue!50, right=of E] (I) {$8$};
    \node[above=of E] {$\mathcal{G}_S$};
    \path (A) edge [loop above] node (A'){}(A);
    \path (B) edge [loop above] node (B'){}(B);
    \path (C) edge [loop below] node (C'){}(C);
    \path (D) edge [loop below] node (D'){}(D);
    \path (E) edge [loop above] node (E'){}(E);
    \path (G) edge [loop below] node (G'){}(G);
    \path (H) edge [loop below] node (H'){}(H);
    \path (I) edge [loop above] node (I'){}(I);
\end{scope}
\begin{scope}[yshift=-1cm]
%Dashed vertical line
\draw[dashed] (6,-1.5) -- (6,3);
\end{scope}
\begin{scope}[xshift = 12cm]
    % Define nodes and edges for the first network
    \node[latent,draw=blue!50] (A) {$1$};
    \node[latent,draw=blue!50 , below=of A] (G) {$5$};
    \node[latent,draw=blue!50, ,left=of G] (B) {$4$};
    
    \node[latent,draw=blue!50, left=of B] (D) {$3$};
    \node[latent,draw=blue!50, left=of D] (C) {$2$};

        \node[latent,draw=blue!50, right=of G] (E) {$6$};
    \node[latent,draw=blue!50, right=of E] (H) {$7$};
    \node[latent,draw=blue!50, right=of H] (I) {$8$};
    \node[above=of A] {$\mathcal{G}_O, \mathcal{G}_R, \mathcal{G}_Q$};
    \path (A) edge [loop above] node (A'){}(A);
    \path (B) edge [loop below] node (B'){}(B);
    \path (C) edge [loop below] node (C'){}(C);
    \path (D) edge [loop below] node (D'){}(D);
    \path (E) edge [loop below] node (E'){}(E);
    \path (G) edge [loop below] node (G'){}(G);
    \path (H) edge [loop below] node (H'){}(H);
    \path (I) edge [loop below] node (I'){}(I);
    \edge {A} {D};
    \edge {A} {C};
    \edge {A} {G};
    \edge {A} {H};
    \edge {A} {B};
    \edge {A} {E};
    \edge {A} {I};
\end{scope}
\end{tikzpicture}

\section{Simulation details}\label{sec:simparam}

The simulations were run on a computer with Intel(R) Core(TM) i7-4790K CPU and 32GB RAM using Pycharm IDE (Python 3.11.7).

\begin{table}[htpb]
   \centering
    \begin{tabular}{|l|c|c|c|c|c|c|}
    \hline
       Baseline  &  \multicolumn{3}{c|}{Example 1} & \multicolumn{3}{c|}{Example 2}\\
        & $N=8$& $N=20$&$N=40$&$N=8$&$N=20$&$N=40$\\
        \hline
         Centralized method&$9.39$ s& $1387.56$ s & $--$&$9.33$ s & $1391.73$ s& $--$\\
         Indirect method &$0.593$ s & $1.45$ s & $2.752$ s&$0.492$ s & $1.19$ s& $2.371$ s\\
         Direct method& $1.37$ s & $3.56$ s & $6.41$ s& $3.116$ s& $104.42$ s &$3494.72$s \\
         Undecomposed direct  &$9.41$ s&  $1396.79$ s
         & $--$& $9.4$ s &$1399.35$ s &$--$
         \\
         \hline
    \end{tabular}
    \caption{Average learning time per iteration of Algorithm~\ref{alg:malspi} for $T = 500$ steps.}
    \label{table:avgtime}
\end{table}

\begin{table}[htpb]
   \centering
    \begin{tabular}{|c|c|c|}
    \hline
       Baseline  &  Policy evaluation & Policy update\\
       \hline
         Centralized & $Q_i(x,u)$ & $\nabla_{K_i} Q_i(x,u)$\\
         Indirect &$Q_i(x_{\I^i_Q},u_{\I^i_Q})$ & $\sum_{j\in \I^i_{\text{GD}}}\nabla_{K_i} Q_j(x_{\I^j_Q},u_{\I^j_Q})$\\
         Direct & $\widehat{Q}_i(x_{\I^i_{\widehat{Q}}},u_{\I^i_{\widehat{Q}}})$ & $\nabla_{K_i} \widehat{Q}_i(x_{\I^i_{\widehat{Q}}},u_{\I^i_{\widehat{Q}}})$\\
         Undecomposed direct & $\widehat{Q}_i(x,u)$& $\nabla_{K_i}\widehat{Q}_i(x,u)$\\
         \hline
    \end{tabular}
    \caption{Comparison of the state, control dependencies in the baselines.}
    \label{tab:comptab}
\end{table}

\begin{table}[htpb]
    \centering
    \begin{tabular}{|l|c|}
    \hline
   \hspace{100pt} Parameter &  Value\\
       \hline
         Number of agents, $N$ & 8\\
         State cost matrix $S_i$, &  $s_i \otimes \mb{I}_{n_x}$\\
         Input cost matrix $R_i$ & $r_i \otimes \mb{I}_{n_x}$\\
         $s_i \in \R^{|\I^i_R|\times|\I^i_R|}$ & $s_i^{kl} =  \begin{cases}
         \frac{200}{|\I^i_R|},& \text{ if } k = l,\\
         \frac{-10}{|\I^i_R|}, & \text{otherwise}
         \end{cases}$\\
         $r_i \in \R^{|\I^i_R|\times|\I^i_R|}$&  $r_i^{kl} = \begin{cases}
             1, & \text{if } k =l,\\
             0,& \text{otherwise}
         \end{cases}$\\
         Process noise, $\sigma_w$& $ 1$\\
         Control/Exploration noise, $\sigma_\eta$ & 1\\
         Initial global stabilizing controller, $K_0$& $\mbf{0}_{Nn_u \times Nn_x}$\\
         Arbitrary global stabilizing controller (data collection), $K^{\text{play}}$ &$\mbf{0}_{Nn_u \times Nn_x}$\\         
         \hline
    \end{tabular}
    \caption{Simulation parameters for Example 1, 2.}
    \label{table:simparam1}
\end{table}
\end{document}